\documentclass[a4paper,10pt]{article}
\usepackage[top=2.54cm,bottom=2.0cm,left=2.0cm,right=2.54cm, includeheadfoot]{geometry}

\usepackage[T1]{fontenc}
\usepackage[utf8]{inputenc}
\usepackage[english]{babel}
\usepackage{enumerate}
\usepackage{multirow,booktabs}
\usepackage[table]{xcolor}
\usepackage{lastpage}
\usepackage{indentfirst}
\usepackage{subcaption}
\usepackage{float}

\usepackage{footnote}

\usepackage{amsmath,amsfonts,amssymb,amscd,amsthm}

\usepackage{mathtools}

\usepackage{bm}

\usepackage[all,2cell]{xy} \UseAllTwocells \SilentMatrices

\usepackage{mathpartir}

\usepackage{hyperref}

\usepackage{subfiles}

\usepackage{multicol}

\usepackage{lineno}

\usepackage{tikz}

\newtheorem{theorem}{Theorem}[section]
 
\newtheorem{corollary}[theorem]{Corollary}
\newtheorem{proposition}[theorem]{Proposition} 

\newtheorem{definition}[theorem]{Definition} 
\newtheorem{example}[theorem]{Example}

\newtheorem*{claim*}{Claim}

\newtheorem{remark}[theorem]{Remark}

\begin{document}

\title{Decoupling Evidence Sources in Paraconsistent Logic: A Generalized PAL2v Framework for Educational Assessment}

\author{
 Arthur Nakamura $^\textup{\scriptsize c}$
   \and
   David Drummond $^\textup{\scriptsize c}$ \and Welbert Pereira $^\textup{\scriptsize c}$ \and Gustavo Lahr$^\textup{\scriptsize c}$ \and Kaique Roberto$^\textup{\scriptsize a,b}$
}

\date{
   $^\textup{\scriptsize a}$\textit{\small Center for Logic, Epistemology and the History of Science (CLE), State University of Campinas (UNICAMP), Campinas, Brazil}
   \\
   $^\textup{\scriptsize b}$\textit{\small Institute of Philosophy and the Humanities (IFCH),  State University of Campinas (UNICAMP), Campinas, Brazil}
   \\
   $^\textup{\scriptsize c}${\small Faculdade Israelita de Ciências da Saúde Albert Einstein (FICSAE), S\~ao Paulo, Brazil}
}

\maketitle

\begin{abstract}
Paraconsistent Annotated Logic with two values (PAL2v) annotates a proposition with a pair $(\mu,\lambda)$ of degrees of favorable and unfavorable evidence, drawn from a bilattice whose four extreme points are the values of Belnap and Dunn. In applications, however, the second coordinate is almost always taken to be the complement of the first, and several such annotations are then fused. We show that this practice determines the sign of the degree of contradiction in advance, independently of the data: for aggregation functions $F$ and $G$ the fused degree of contradiction equals $F-G^{d}$, where $G^{d}$ is the De Morgan dual of $G$, so that it vanishes identically when the two are dual, and has a constant sign otherwise. In particular the componentwise minimum used throughout the literature yields only consistent annotations, and the conjunction of annotated logic only exact ones; five of the twelve logical states, contradiction among them, are essentially unreachable by construction. We also give a decision procedure for the twelve states that is provably total, in place of the inequality lists in current use, which assign two states to about half of the non-extreme region. We then propose a construction in which the two coordinates are computed from disjoint families of evidence, show that every logical state is attainable. Educational assessment provides a natural instance, since summative and formative instruments are institutionally distinct: mapping examinations to $\mu$ and disengagement to $\lambda$ makes the degree of contradiction equal to the gap between what a student demonstrates and what that student invests. Simulations over three cohort profiles illustrate the classification.

\medskip
\noindent\textbf{Keywords:} paraconsistent annotated logic; multiple-valued logic;
bilattice; evidence aggregation; educational assessment.
\end{abstract}

\section{Introduction}


Paraconsistent Annotated Logic with two values (PAL2v) is a non-classical logic that allows reasoning with contradictory information while maintaining logical consistency. Formally, let $P$ be a proposition. In PAL2v, $P$ is associated with an annotation $(\mu, \lambda)$ belonging to a lattice $L$, here represented by the unit square $[0, 1]^2$ (or its image contained in $[-1, 1]^2$ given by an adequate bijective linear transformation), where $\mu$ denotes the degree of favorable evidence and $\lambda$ denotes the degree of unfavorable evidence \cite{abe2019three}.

The inferential process in PAL2v applications is governed by the mapping of these annotations onto a barycentric coordinate system, which defines the logical state of $P$. The most widely used operators in the literature are the Degree of Certainty ($D_c$) and the Degree of Contradiction ($D_{ct}$), defined as:
\begin{align*}
    D_c = \mu - \lambda\mbox{ and }D_{ct} = \mu + \lambda - 1
\end{align*}

 The Degree of Certainty $D_c \in [-1, 1]$ measures the proximity of the proposition to the states of absolute truth ($D_c = 1$) or absolute falsity ($D_c = -1$). Conversely, $D_{ct} \in [-1, 1]$ quantifies the degree of inconsistency: when $D_{ct} > 0$, the system identifies a state of contradiction (over-determined information), and when $D_{ct} < 0$, it identifies a state of paracompleteness or indeterminacy (under-determined information) \cite{silvafilho2010uncertainty}. By establishing $D_c$ and $D_{ct}$ as the primary metrics for logical evaluation, the PAL2v framework provides a quantifiable method to navigate the boundary between consistent and inconsistent data. 

Most PAL2v applications are built upon this standard calculation of $D_{ct}$ to identify states of inconsistency or paracompleteness. In the field of Artificial Intelligence, the development of Paraconsistent Artificial Neural Networks (PANnets) uses these operators within their activation functions to process conflicting data without system trivialization \cite{silvafilho2010uncertainty}. This approach has been successfully implemented in medical diagnostics, particularly in the early detection of Alzheimer's disease, where $D_{ct}$ is crucial for reconciling divergent clinical symptoms and laboratory results \cite{silvalopes2010alzheimer}.

Applications typically follow a three-stage pipeline: data normalization, distance-based attribution, and multi-source fusion. Initially, raw clinical or physical indicators $I$ are scaled to the unit interval $[0, 1]$ using min-max normalization:

\begin{align*}
    I_{norm} = \frac{I - I_{min}}{I_{max} - I_{min}}
\end{align*}

Subsequently, the degrees of belief ($\mu$) and disbelief ($\lambda$) are calculated as a function of the distance between the normalized indicator $I_{norm}$ and a reference parameter $P_j$ (such as a statistical mean or a control limit), as observed in the methodologies of Mario et al. \cite{mario2010cephalometric} and Oshiyama et al. \cite{oshiyama2012medical}. The standard attribution follows:

\begin{align*}
    \lambda = |I_{norm} - P_j| \quad \text{and} \quad \mu = 1 - \lambda =1 - |I_{norm} - P_j|
\end{align*}

Finally, when several sensors or expert perspectives are involved, the individual annotations are fused into one. The usual choice is the componentwise minimum,
$(\mu, \lambda)_{final} = (\min\{\mu_1, \dots, \mu_n\}, \min\{\lambda_1, \dots, \lambda_n\})$,
which is commonly described as the conjunction of the annotations, although the conjunction of annotated logic takes the minimum of the degrees of belief and the maximum of the degrees of disbelief \cite{abe2019three}, following the generalized procedure described theoretically as twist structures \cite{busa:gala:mar:2022}. Section~\ref{sec:classical} shows that the difference is not incidental.

This pipeline carries a classical bias. Taking the degree of disbelief to be the complement of the degree of belief \cite{mario2010cephalometric, oshiyama2012medical} makes the two coordinates one quantity in disguise, and the degree of contradiction then records nothing about the evidence: its sign is fixed in advance by the choice of aggregation. It is at most zero for the componentwise minimum, at least zero for the componentwise maximum, and exactly zero for the conjunction of annotated logic. Under any of these choices the fused annotation is confined to one half of the lattice, and five of the twelve logical states are essentially unreachable. A framework adopted in order to represent contradiction thus rules it out at the first step. We make this precise in Section~\ref{sec:classical}.

To address these shortcomings, we propose a generalized PAL2v methodology, focusing on two primary objectives. First, we advocate for representing the logical output directly within the unit square $[0, 1]^2$, as opposed to the traditional transformation into the $[-1, 1]^2$ Cartesian space of certainty and contradiction. This shift is intended to facilitate the visualization and communication of results for non-mathematicians and non-logicians, allowing for a more intuitive grasp of the evidence levels. Second, we propose a novel data-driven approach to calculate $\lambda$ through a process that is structurally independent of $\mu$. By decoupling these parameters, the proposed approach aims to achieve a novel paraconsistent interpretation of the data, where the degrees of contradiction and paracompleteness emerge as properties of the evidence rather than artefacts of the
arithmetic relating the two coordinates.

\section{Related Work}\label{sec:related}
 
The construction proposed here sits at the meeting point of four lines of work:
 
\paragraph{Four-valued and annotated logics.} The idea that a proposition may be supported by evidence for and evidence against, independently, goes back to the four-valued logic of Belnap and Dunn \cite{belnap1977useful}, whose values true, false, both and neither, are exactly the four extreme states of Proposition~\ref{prop:bilattice}. Annotated logics replace the two-element scale by a lattice of annotations, and the two-valued annotated logic PAL2v takes that lattice to be the unit square. What the applied PAL2v literature adds to this picture is a refinement of the four values into twelve, obtained by asking how far an annotation lies from the center of the square and in which direction it leans \cite{abe2019three, silvafilho2010uncertainty}. Our contribution to this line is not the refinement itself, which we take over, but its specification: we replace the lists of inequalities in current use, which do not determine a classification, by a decision procedure that is provably total.
 
\paragraph{Bilattices and twist structures.} The square with its two orders is a bilattice in the sense of Ginsberg \cite{ginsberg1988multivalued}, and the vocabulary developed by Fitting \cite{fitting1991bilattices} --- conflation, exact and consistent elements --- names precisely the loci that our main results turn on. Seen from there, the limitation established in Section~\ref{sec:classical} has a compact statement: the pipeline in current use never leaves the consistent part of the bilattice, and its canonical variant never leaves the exact part, which is a copy of the underlying lattice $[0,1]$ and therefore carries no contradiction at all. That a bilattice of this shape arises as a twist structure over its underlying lattice is well known \cite{busa:gala:mar:2022}; the decoupling of Section~\ref{sec:generalized} can be read as the observation that an application inherits the full twist structure only when it feeds the two coordinates from genuinely separate sources.
 
\paragraph{Applications of PAL2v.} PAL2v has been applied to cephalometric diagnosis \cite{mario2010cephalometric}, to the detection of Alzheimer's disease through paraconsistent neural networks \cite{silvalopes2010alzheimer}, to the management of communication network routes \cite{abe2021sensors} and, closest to the present work, to the classification of medical equipment from corrective maintenance data \cite{oshiyama2012medical}. These applications share a common shape: a single family of indicators is normalized, the distance of each indicator to a reference parameter yields a degree of belief, the degree of disbelief is taken as its complement, and several such annotations are fused. The present article is a study of what that shape costs, and of what a domain must supply in order to avoid it. Educational assessment supplies it, because summative and formative instruments are institutionally distinct, and it is in this sense that the application is not merely one more dataset for an existing method.

\paragraph{Non-classical formalisms in educational assessment.}
Educational assessment has for some time been a testing ground for graded and many-valued formalisms. Fuzzy inference systems have been applied to student performance evaluation widely enough to have attracted a systematic review \cite{tuan2025systematic}, including in engineering education \cite{baba2012fuzzy}; intuitionistic and neutrosophic extensions have followed, adding a degree of hesitation or of indeterminacy to the single membership degree of the fuzzy approach \cite{atanassov1986intuitionistic, smarandache1999unifying}. What these formalisms have in common is that they treat uncertainty as vagueness or as missing information. Contradiction, in the sense of evidence that simultaneously supports and undermines the same claim, is not among the states they are designed to express, and in the intuitionistic case it is excluded by the very constraint $\mu+\lambda\le1$ that defines the framework, as Remark~\ref{rem:atanassov} makes precise. Paraconsistent annotation is the natural formalism for that gap, and it has seen little use in this domain. The present article is concerned with the formal obstacle that has to be removed before it can be applied at all.

\section{Educational context}

In higher education, learning is conceived as a dynamic and multidimensional process that transcends the mere acquisition of technical knowledge. It involves the integration of conceptual understanding, practical skills, and socio-emotional competencies, enabling students to cultivate intellectual autonomy and critical thinking in the face of professional challenges \cite{cetron2019decoding}. One of the central challenges in educational science is to develop methods capable of assessing the evolution of learning in a more comprehensive and nuanced manner. Evaluating only the technical knowledge acquired is insufficient to capture the maturation of perceptions, attitudes, and professional competencies that unfold throughout undergraduate studies \cite{hernandez2025direct}.

In our Biomedical Engineering undergraduate program at the Faculdade Israelita de Ciências da Saúde Albert Einstein, student progress is monitored through summative assessments — such as individual and group tests, essays, seminars, and exams — as well as formative assessments, which include continuous observation of students’ attitudes such as punctuality, teamwork, receptiveness to feedback, and communication skills, complemented by recurrent faculty mentoring. This dual approach seeks to encompass both cognitive development and personal growth, acknowledging that learning is a complex phenomenon integrating technical expertise with human dimensions.
Despite the relevance of combining summative and formative assessments, higher education still lacks robust tools capable of synthesizing these dimensions into actionable insights for educational management and decision-making. The absence of such integrative mechanisms limits the ability to monitor learning trajectories in a holistic manner. Therefore, we consider that a mathematical model designed to predict students’ learning status by jointly analyzing summative and formative indicators could  provide a more comprehensive framework for evaluating learning achievements by academic performance, socio-emotional competencies, and professional attitudes, thereby contributing to more effective educational strategies and institutional decision-making.

\section{The Classical Pipeline and its Structural Limitation}\label{sec:classical}

In the pipeline that has become standard in applications, the degree of unfavorable evidence is obtained as the complement of the favorable one, $\lambda_k=1-\mu_k$, and several such annotations are then fused by a pair of aggregation functions. The choice of that pair is usually presented as a modeling decision. The following observation shows that it is not: once each source satisfies $\lambda_k=1-\mu_k$, the pair of aggregators fixes the sign of the degree of contradiction in advance, independently of the data.

\begin{definition}[De Morgan dual]\label{def:dual}
Let $G:[0,1]^n\to[0,1]$. The \emph{De Morgan dual} of $G$ is the function
$G^{d}:[0,1]^n\to[0,1]$ given by
$$G^{d}(z_1,\dots,z_n)=1-G(1-z_1,\dots,1-z_n).$$
\end{definition}

\begin{proposition}[The sign of the contradiction is algebraic]\label{prop:duality}
Let $F,G:[0,1]^n\to[0,1]$ be aggregation functions, and let
$(\mu_k,\lambda_k)$, $k=1,\dots,n$, be annotations with $\lambda_k=1-\mu_k$ for every $k$. Write $\mu^{\ast}=F(\mu_1,\dots,\mu_n)$ and $\lambda^{\ast}=G(\lambda_1,\dots,\lambda_n)$ for the fused annotation. Then, for every $\mu=(\mu_1,\dots,\mu_n)$,
$$D_{ct}(\mu^{\ast},\lambda^{\ast})=F(\mu)-G^{d}(\mu).$$
Consequently:
\begin{enumerate}
  \item $D_{ct}$ vanishes identically if and only if $G$ is the De Morgan dual of
        $F$;
  \item if $F\le G^{d}$ pointwise, then $D_{ct}\le0$ for every input;
  \item if $F\ge G^{d}$ pointwise, then $D_{ct}\ge0$ for every input.
\end{enumerate}
\end{proposition}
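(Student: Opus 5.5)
The plan is a direct computation followed by reading off the three consequences. First I substitute the hypothesis $\lambda_k=1-\mu_k$ into the definition of the fused unfavorable degree. This gives $\lambda^{\ast}=G(1-\mu_1,\dots,1-\mu_n)$, which by Definition~\ref{def:dual} equals $1-G^{d}(\mu)$. Plugging this and $\mu^{\ast}=F(\mu)$ into $D_{ct}=\mu^{\ast}+\lambda^{\ast}-1$, the constants cancel and leave
$$D_{ct}(\mu^{\ast},\lambda^{\ast})=F(\mu)+\bigl(1-G^{d}(\mu)\bigr)-1=F(\mu)-G^{d}(\mu).$$
This identity holds for every $\mu\in[0,1]^n$, because the map $\mu\mapsto(1-\mu_1,\dots,1-\mu_n)$ is a bijection of $[0,1]^n$. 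Hence every input $\mu$ arises, and every complementary vector of annotations is of this form. No property of aggregation functions beyond having values in $[0,1]$ is needed.

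Items 2 and 3 are then immediate. If $F\le G^{d}$ pointwise, the difference is nonpositive at every input, and symmetrically for $F\ge G^{d}$.

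Item 1 is the only step needing a moment of care, namely identifying ``$G$ is the dual of $F$'' with ``$F=G^{d}$''. By the identity, $D_{ct}\equiv0$ on $[0,1]^n$ if and only if $F=G^{d}$ as functions. So I will check that duality is an involution: $(G^{d})^{d}(z)=1-\bigl(1-G(1-(1-z))\bigr)=G(z)$. Therefore $F=G^{d}$ is equivalent to $F^{d}=G$, that is, to $G$ being the De Morgan dual of $F$. The phrase ``vanishes identically'' must be read as vanishing for all $\mu\in[0,1]^n$, which is exactly what makes the equivalence a statement about equality of functions.

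I do not expect a genuine obstacle. The one point to state explicitly is that the quantifier ``for every input'' ranges over all of $[0,1]^n$. That is what lets pointwise (in)equalities of functions be transferred in both directions in item 1.
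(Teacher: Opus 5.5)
Your proposal is correct and follows the paper's proof: the same substitution $\lambda^{\ast}=G(1-\mu_1,\dots,1-\mu_n)=1-G^{d}(\mu)$ gives the identity, and the three items are read off from it. Your check that $(G^{d})^{d}=G$, so that $F=G^{d}$ is equivalent to $G=F^{d}$, usefully makes explicit a step in item 1 that the paper simply calls immediate.
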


\begin{proof}
Since $\lambda_k=1-\mu_k$ we have
$\lambda^{\ast}=G(1-\mu_1,\dots,1-\mu_n)=1-G^{d}(\mu)$ by
Definition~\ref{def:dual}. Therefore
$$D_{ct}(\mu^{\ast},\lambda^{\ast})=\mu^{\ast}+\lambda^{\ast}-1
   =F(\mu)+\bigl(1-G^{d}(\mu)\bigr)-1=F(\mu)-G^{d}(\mu),$$
which is the stated identity. Items 1 to 3 are immediate readings of it.
\end{proof}

Proposition~\ref{prop:duality} applies in particular to the three aggregation pairs that occur in the literature, and none of them escapes.

\begin{corollary}\label{cor:three-pairs}
Under the hypotheses of Proposition~\ref{prop:duality}:
\begin{enumerate}
  \item if $F=\min$ and $G=\max$ --- the conjunction of annotated logic --- then
        $D_{ct}=0$ identically;
  \item if $F=G=\min$, then
        $D_{ct}=\min_{k}\mu_k-\max_{k}\mu_k\le0$, with equality if and only if
        $\mu_1=\dots=\mu_n$;
  \item if $F=G=\max$, then
        $D_{ct}=\max_{k}\mu_k-\min_{k}\mu_k\ge0$, with equality under the same
        condition.
\end{enumerate}
\end{corollary}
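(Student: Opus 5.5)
The plan is to apply the identity of Proposition~\ref{prop:duality}, $D_{ct}=F(\mu)-G^{d}(\mu)$, and to reduce each item to computing the De Morgan duals of $\min$ and $\max$. The central observation is
$$\min{}^{d}(z)=1-\min_k(1-z_k)=1-\bigl(1-\max_k z_k\bigr)=\max_k z_k,$$
and symmetrically $\max^{d}=\min$. This holds because $t\mapsto 1-t$ is order-reversing on $[0,1]$, so it exchanges the minimum and the maximum of a finite family.

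For item 1 we take $F=\min$ and $G=\max$. Then $G^{d}=\max^{d}=\min=F$, and the identity gives $D_{ct}=0$ for every input. Equivalently, this is the ``if'' direction of item 1 of Proposition~\ref{prop:duality}, since $G$ is the dual of $F$ (duality being involutive, $(G^d)^d=G$).

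For item 2 we take $F=G=\min$. Then $G^{d}=\max$, so $D_{ct}=\min_k\mu_k-\max_k\mu_k$. This is $\le 0$ because the minimum of a finite family never exceeds its maximum. Equality holds precisely when the minimum and maximum coincide, that is, when all $\mu_k$ are equal. Item 3 follows in the same way: with $F=G=\max$ we get $G^{d}=\min$, so $D_{ct}=\max_k\mu_k-\min_k\mu_k\ge0$, with the same equality condition. The sign statements can also be read off items 2 and 3 of Proposition~\ref{prop:duality}, using the pointwise inequalities $\min\le\max$ and $\max\ge\min$.

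There is no real obstacle here. The only step that needs care is the dual computation, which rests on the identity $\min_k(1-z_k)=1-\max_k z_k$. I would justify it in one line from the monotonicity of $t\mapsto1-t$: it is decreasing, so it carries the largest $z_k$ to the smallest $1-z_k$.
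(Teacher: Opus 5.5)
Your proposal is correct and is essentially the paper's own proof. Both read each item off the identity $D_{ct}=F(\mu)-G^{d}(\mu)$ of Proposition~\ref{prop:duality}, after computing that the De Morgan dual of $\max$ is $\min$ and vice versa. Your item~3, which uses $G^{d}=\min$ for $G=\max$, is slightly cleaner than the paper's wording, which attaches that case to the computation done for $G=\min$.
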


\begin{proof}
For $G=\max$ we have $G^{d}(\mu)=1-\max_k(1-\mu_k)=\min_k\mu_k$, so with $F=\min$ Proposition~\ref{prop:duality} gives $D_{ct}=\min_k\mu_k-\min_k\mu_k=0$. For $G=\min$ we have $G^{d}(\mu)=\max_k\mu_k$, so $F=\min$ gives $D_{ct}=\min_k\mu_k-\max_k\mu_k$, which is non-positive and vanishes exactly when the minimum equals the maximum; and $F=\max$ gives $D_{ct}=\max_k\mu_k-\min_k\mu_k$, which is the same quantity with the opposite sign.
\end{proof}

The three cases of Corollary~\ref{cor:three-pairs} should be compared directly. The canonical conjunction places the fused annotation exactly on the anti-diagonal $\mu+\lambda=1$, where the degree of contradiction is zero by construction. The pairwise minimum places it on or below that line, and the pairwise maximum on or above it. In all three cases the quantity that is being read as a degree of contradiction is, up to sign, the \emph{spread} $\max_k\mu_k-\min_k\mu_k$ of the sources: it measures how much the sources disagree with one another, not how much the evidence about $P$ conflicts. This matters, because disagreement among sources and conflict in the evidence are different phenomena, and a methodology that reports the first under the name of the second cannot detect the second at all.

\begin{corollary}\label{cor:unreachable}
Under the hypotheses of Proposition~\ref{prop:duality} with $F=G=\min$, the fused annotation always lies on or below the anti-diagonal $\mu+\lambda=1$ of the USCP. Hence
\begin{enumerate}
  \item the states $\top$, $\top\!\to\!t$ and $\top\!\to\!f$, which require $\mu+\lambda>1$, are unreachable;
  \item the states $qt\!\to\!\top$ and $qf\!\to\!\top$, which require $\mu+\lambda\ge1$, are reachable only when $\mu_1=\dots=\mu_n$, in which case the fused annotation lies on the anti-diagonal.
\end{enumerate}
\end{corollary}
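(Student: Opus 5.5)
The plan is to read the corollary off Corollary~\ref{cor:three-pairs}, item~2, together with the region conditions for the twelve states, which we take from the statement itself as hypotheses. With $F=G=\min$ and $\lambda_k=1-\mu_k$, Proposition~\ref{prop:duality} gives
$$\mu^{\ast}+\lambda^{\ast}-1=D_{ct}(\mu^{\ast},\lambda^{\ast})=\min_k\mu_k-\max_k\mu_k\le 0 .$$
Hence $\mu^{\ast}+\lambda^{\ast}\le1$, so the fused annotation lies on or below the anti-diagonal of the USCP, which is the opening claim. The same identity shows that $\mu^{\ast}+\lambda^{\ast}=1$ holds exactly when $\min_k\mu_k=\max_k\mu_k$, that is, when $\mu_1=\dots=\mu_n$. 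This equality characterization is the only extra ingredient needed.

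For item~1, the statement stipulates that $\top$, $\top\to t$ and $\top\to f$ require $\mu+\lambda>1$. Every fused annotation satisfies $\mu^{\ast}+\lambda^{\ast}\le1$, so no input vector can produce any of these states, and the item follows.

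For item~2, the states $qt\to\top$ and $qf\to\top$ require $\mu+\lambda\ge1$. Combined with $\mu^{\ast}+\lambda^{\ast}\le1$, this forces $\mu^{\ast}+\lambda^{\ast}=1$. By the equality case of Corollary~\ref{cor:three-pairs}, that happens only when $\mu_1=\dots=\mu_n$, and then the fused point is $(\mu_1,1-\mu_1)$, which lies on the anti-diagonal.

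The only step needing care is the word ``only'' in item~2. It is a necessary condition and needs no existence claim, so nothing about which anti-diagonal points are actually classified as $qt\to\top$ or $qf\to\top$ has to be checked. That question depends on the decision procedure given later in the paper, and we do not use it.
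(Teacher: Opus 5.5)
Your proposal is correct and follows the paper's proof essentially step for step. Both arguments read $\mu^{\ast}+\lambda^{\ast}-1=\min_k\mu_k-\max_k\mu_k\le0$ off Corollary~\ref{cor:three-pairs}, then use it to rule out the three states requiring $\mu+\lambda>1$ outright, and to force $\mu^{\ast}+\lambda^{\ast}=1$, hence $\mu_1=\dots=\mu_n$, for the two states requiring $\mu+\lambda\ge1$, with ``only'' treated as a necessary condition just as the paper does.
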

\begin{proof}
By Corollary~\ref{cor:three-pairs}, $\mu^{\ast}+\lambda^{\ast}-1\le0$. For the first three states, the condition $\mu+\lambda>1$ would require $\mu^{\ast}+\lambda^{\ast}>1$, impossible. For the last two, the condition $\mu+\lambda\ge1$ can hold only if $\mu^{\ast}+\lambda^{\ast}=1$, which by Corollary~\ref{cor:three-pairs} occurs exactly when $\mu_1=\dots=\mu_n$.
\end{proof}

\begin{remark}[Half of the square is unused]\label{rem:atanassov}
Corollary~\ref{cor:unreachable} says that this pipeline confines every annotation to the consistent part of the bilattice, so that half of the diagram --- and five of the twelve logical states --- is unreachable in the non-degenerate case, by construction. That half has an independent life in the literature: the set $\{\mu+\lambda\le1\}$ is exactly the domain of Atanassov's intuitionistic fuzzy sets, in which a membership degree and a non-membership degree are constrained to sum to at most one, and the slack
$$\pi=1-\mu-\lambda$$
is the degree of hesitation, recording how much evidence is missing \cite{atanassov1986intuitionistic}. Two remarks follow. First, a construction introduced in order to represent contradiction ends up operating, without saying so, inside a framework designed to exclude it. Second, and less obviously, the hesitation degree is not free to do its work either: by Corollary~\ref{cor:three-pairs}, under the componentwise minimum $\pi=\max_k\mu_k-\min_k\mu_k$, so $\pi$ measures the spread among the sources rather than the evidence they fail to supply. Both readings of the lower half of the square are thereby lost at once. Under the construction of Section~\ref{sec:generalized} the annotation ranges over the whole square, $\pi$ recovers its intended meaning on the consistent half, and its negative values on the other half record conflict rather than absence.
\end{remark}

\begin{remark}[On the specification of the twelve regions]\label{rem:overlap}
A second difficulty is independent of the aggregation step and concerns the specification of the states themselves. The twelve regions are usually specified by lists of inequalities on $D_c$, $D_{ct}$, $\mu$ and $\lambda$, rather than by a decision procedure. Such lists do not determine a classification: evaluated on a uniform grid over the USCP, they assign two distinct states to approximately half of the non-extreme region, and they leave the dividing lines themselves unassigned, since the conditions on either side are strict. The annotation $(\mu,\lambda)=(0.6,\,0.3)$ is a case in point: it satisfies simultaneously the usual conditions for $qt\!\to\!\top$ and for $qt\!\to\!\bot$. Under
Definition~\ref{def:decision} it lies in the quadrant $u>0$, $v<0$, at distance $|u|+|v|=0.3<\tfrac12$ from the center, with $\mu+\lambda=0.9<1$, and its state is unambiguously $qt\!\to\!\bot$. Proposition~\ref{prop:partition} guarantees that no such ambiguity remains.
\end{remark}

\section{The paraconsistent annotation and its logical states}

Throughout this work, the proposition under evaluation is

\begin{quote}
$P_i$: \emph{``student $i$ has attained the intended learning outcomes of the course.''}
\end{quote}

Summative assessments provide evidence \emph{in favor} of $P_i$: a student who performs well on examinations gives us reason to assert $P_i$. Formative assessments provide evidence \emph{against} $P_i$ in the following precise sense: the intended learning outcomes of the curriculum include preparation, teamwork and participation, so a student who systematically fails to engage gives us reason to deny $P_i$, \emph{independently} of examination performance. The two sources are therefore evidence about the same proposition, gathered through different instruments, and neither determines the other. This is exactly the situation that paraconsistent annotation is designed to represent, and it is why the annotation cannot be reduced to a single aggregate grade.

\begin{definition}[Annotation]\label{def:annotation}
An \emph{annotation} is a pair $(\mu,\lambda)\in[0,1]^2$, where $\mu$ is the degree of favorable evidence and $\lambda$ the degree of unfavorable evidence for $P_i$. The set $[0,1]^2$ is called the Unit Square of the Cartesian Plane (USCP). We write
$$u=\mu-\tfrac12, \qquad v=\lambda-\tfrac12$$
for the position of the annotation relative to the center of the USCP.
\end{definition}

All annotations, all figures and all results in this work are expressed in the USCP. The degrees of certainty $D_c=\mu-\lambda$ and of contradiction $D_{ct}=\mu+\lambda-1$ are used only as abbreviations inside proofs; all operations remain within the USCP.
 
The square carries two main orderings, and it is worth making them explicit before the classification is defined, since the states of Definition~\ref{def:decision} are named after their extreme points.
 
\begin{definition}[Truth and knowledge orders]\label{def:orders}
For annotations $a=(\mu_1,\lambda_1)$ and $b=(\mu_2,\lambda_2)$ we set
$$a\le_t b\mbox{ if, and only if, }\mu_1\le\mu_2\mbox{ and }\lambda_1\ge\lambda_2;$$
and
$$a\le_k b\mbox{ if, and only if, }\mu_1\le\mu_2 \mbox{ and } \lambda_1\le\lambda_2 .$$
We read $a\le_t b$ as ``$b$ is at least as true as $a$'' and $a\le_k b$ as ``$b$ rests on at least as much evidence as $a$''.
\end{definition}
 
\begin{proposition}\label{prop:bilattice}
$(\,[0,1]^2,\le_t,\le_k\,)$ is a complete interlaced bilattice, namely the product bilattice of the unit interval with itself. Its operations are
$$a\wedge_t b=(\mu_1\wedge\mu_2,\;\lambda_1\vee\lambda_2),
\qquad a\vee_t b=(\mu_1\vee\mu_2,\;\lambda_1\wedge\lambda_2),$$
$$a\otimes b=(\mu_1\wedge\mu_2,\;\lambda_1\wedge\lambda_2),
\qquad a\oplus b=(\mu_1\vee\mu_2,\;\lambda_1\vee\lambda_2),$$
where $\wedge$ and $\vee$ denote minimum and maximum in $[0,1]$. The extreme points of the two orders are
$$t=(1,0),\quad f=(0,1),\quad \top=(1,1),\quad \bot=(0,0),$$
$t$ and $f$ being the greatest and least elements for $\le_t$, and $\top$ and $\bot$ for $\le_k$. The four of them form a sub-bilattice isomorphic to Belnap-Dunn four-valued lattice (\cite{belnap1977useful}, \cite{fitting1991bilattices}, \cite{ginsberg1988multivalued}).
\end{proposition}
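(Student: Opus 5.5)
The plan is to treat $[0,1]^2$ as the product of the lattice $L=([0,1],\le)$ with its order dual $L^{\partial}=([0,1],\ge)$ for $\le_t$, and with $L$ itself for $\le_k$. Everything will then reduce to componentwise facts about the complete chain $[0,1]$.

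\textbf{Completeness and operations.} For $\le_t$, the relation $a\le_t b$ is exactly the product order on $L\times L^{\partial}$. For $\le_k$, it is the product order on $L\times L$. Since $[0,1]$ is a complete lattice, so is its dual, and products of complete lattices are complete with joins and meets computed componentwise. Hence both orders are complete lattices. Their binary meets and joins are the componentwise ones. In $L^{\partial}$ the meet is $\vee$ and the join is $\wedge$, so this yields precisely the four displayed formulas for $\wedge_t,\vee_t,\otimes,\oplus$. The same componentwise description of arbitrary suprema and infima gives completeness.

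\textbf{Interlacing.} We must show that each of the four operations is monotone in each argument with respect to both orders. Write $a\le_t a'$ as $\mu_1\le\mu_1'$ and $\lambda_1\ge\lambda_1'$. Then $a\otimes b=(\mu_1\wedge\mu_2,\lambda_1\wedge\lambda_2)$ has first coordinate nondecreasing and second coordinate nonincreasing, so $a\otimes b\le_t a'\otimes b$. The same argument works for $\oplus$ and for the other combinations, because $\wedge$ and $\vee$ are each monotone in both arguments on the chain, and the two orders differ only in the direction on the second coordinate. This is the step with the most cases, eight in total, but each is a one-line check. We can cut it in half by noting that the map $(\mu,\lambda)\mapsto(\mu,1-\lambda)$ swaps the two orders and swaps the roles of the operations accordingly. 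The only real subtlety is making sure the right convention of \cite{fitting1991bilattices} is being verified, namely monotonicity of the $t$-operations in $\le_k$ and of the $k$-operations in $\le_t$.

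\textbf{Extreme points and the four-element sub-bilattice.} In $L\times L^{\partial}$ the top is $(1,0)=t$ and the bottom is $(0,1)=f$. In $L\times L$ the top is $(1,1)=\top$ and the bottom is $(0,0)=\bot$. The set $\{0,1\}^2$ is closed under componentwise $\wedge$ and $\vee$, so it is closed under all four operations, and it is a sub-bilattice. Restricting the two orders to this set gives the diamond with $t,f$ as $\le_t$-extremes and $\top,\bot$ as $\le_k$-extremes. This is by definition Belnap--Dunn's $\mathcal{FOUR}$. Explicitly, the map $\{0,1\}^2\to\mathcal{FOUR}$ that sends $(1,0),(0,1),(1,1),(0,0)$ to true, false, both, neither preserves both orders, since $\mathcal{FOUR}$ is itself the product bilattice $\{0,1\}\odot\{0,1\}$. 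It is therefore the required isomorphism.
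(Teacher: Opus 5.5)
Your proposal is correct and follows essentially the same route as the paper: both identify $\le_t$ and $\le_k$ as the product orders on $[0,1]\times[0,1]^{\partial}$ and $[0,1]\times[0,1]$, read off completeness and the four operations componentwise, obtain interlacing from the monotonicity of $\wedge$ and $\vee$ on $[0,1]$, and restrict to $\{0,1\}^2$ to get the Belnap--Dunn sub-bilattice. Your extra observation that $(\mu,\lambda)\mapsto(\mu,1-\lambda)$ exchanges the two orders, and so halves the interlacing checks, is a valid shortcut but does not change the argument.
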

 
\begin{proof}
Both orders are products of the order of $[0,1]$ with itself, the second coordinate being reversed in the case of $\le_t$; each is therefore a complete lattice with the operations displayed. Interlacing amounts to the monotonicity of each pair of operations with respect to the other order, which holds because every operation is computed coordinatewise from $\wedge$ and $\vee$ on $[0,1]$, and these are monotone. The four listed points are the extreme points of the two coordinatewise orders, and restricting to $\{0,1\}^2$ gives the Belnap-Dunn lattice.
\end{proof}
 
Two involutions act on the square, and both have an immediate reading in the present application.
 
\begin{definition}[Negation and Conflation]\label{def:involutions}
For $a=(\mu,\lambda)$ we define
$$\neg a:=(\lambda,\mu), \qquad -a:=(1-\lambda,\,1-\mu).$$
\end{definition}
 
\begin{proposition}\label{prop:involutions}
Negation is an involution that reverses $\le_t$ and preserves $\le_k$; conflation is an involution that preserves $\le_t$ and reverses $\le_k$. Geometrically, $\neg$ is the reflection of the USCP in the diagonal $\mu=\lambda$ and $-$ is the reflection in the anti-diagonal $\mu+\lambda=1$; their composite
$a\mapsto(1-\mu,1-\lambda)$ is the reflection in the center. Consequently $\{\mathrm{id},\neg,-,\neg-\}$ is a Klein four-group of symmetries of the square.
\end{proposition}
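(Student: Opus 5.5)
The plan is to verify each claim of Proposition~\ref{prop:involutions} by direct coordinate computation from Definition~\ref{def:involutions} and the orders of Definition~\ref{def:orders}.

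\textbf{Involutions.} First, $\neg\neg(\mu,\lambda)=\neg(\lambda,\mu)=(\mu,\lambda)$. Second,
$$--(\mu,\lambda)=-(1-\lambda,1-\mu)=\bigl(1-(1-\mu),\,1-(1-\lambda)\bigr)=(\mu,\lambda).$$
Both maps send $[0,1]^2$ to itself, so each is a bijective involution of the USCP.

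\textbf{Order behaviour.} Take $a=(\mu_1,\lambda_1)$ and $b=(\mu_2,\lambda_2)$.
\begin{enumerate}
  \item If $a\le_t b$, i.e.\ $\mu_1\le\mu_2$ and $\lambda_1\ge\lambda_2$, then $\neg b=(\lambda_2,\mu_2)\le_t(\lambda_1,\mu_1)=\neg a$, since the first coordinates satisfy $\lambda_2\le\lambda_1$ and the second satisfy $\mu_2\ge\mu_1$. So $\neg$ reverses $\le_t$.
  \item The condition for $a\le_k b$ is symmetric under swapping coordinates, so $\neg$ preserves $\le_k$.
  \item If $a\le_t b$, then $1-\lambda_1\le1-\lambda_2$ and $1-\mu_1\ge1-\mu_2$, which is exactly $-a\le_t -b$. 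So $-$ preserves $\le_t$.
  \item If $a\le_k b$, then $1-\lambda_1\ge1-\lambda_2$ and $1-\mu_1\ge1-\mu_2$, so $-b\le_k -a$. So $-$ reverses $\le_k$.
\end{enumerate}
Since each map is an involution, these implications are in fact equivalences.

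\textbf{Geometry.} Both maps are affine and involutive, so it suffices to identify their fixed-point sets. The map $\neg$ fixes exactly the line $\mu=\lambda$, and it sends $p$ to $\neg p$ with difference vector $(\lambda-\mu,\mu-\lambda)$, which is orthogonal to that line. Hence $\neg$ is the reflection in the diagonal. The map $-$ fixes exactly the line $\lambda=1-\mu$, and its difference vector is $(1-\lambda-\mu,\,1-\mu-\lambda)$, which is parallel to $(1,1)$ and hence orthogonal to the anti-diagonal. Hence $-$ is the reflection in the anti-diagonal. For the composite,
$$\neg(-a)=(1-\mu,1-\lambda)=-(\neg a),$$
which is $a\mapsto(1,1)-a$, the point reflection in $(\tfrac12,\tfrac12)$. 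In the coordinates $u,v$ of Definition~\ref{def:annotation} it reads $(u,v)\mapsto(-u,-v)$.

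\textbf{Group structure.} The set $\{\mathrm{id},\neg,-,\neg-\}$ contains the identity, each of its elements has order at most two, and $\neg$ and $-$ commute by the identity above. It is therefore closed under composition, since $\neg\circ(\neg-)=-$ and $-\circ(\neg-)=\neg$. The four maps are distinct: for instance, on $(1,0)$ they give $(1,0)$, $(0,1)$, $(1,0)$ and $(0,1)$ respectively, while on $(1,1)$ they give $(1,1)$, $(1,1)$, $(0,0)$ and $(0,0)$. So the set is a group of order four in which every non-identity element is an involution, which is the Klein four-group. Each of the four maps is an isometry preserving $[0,1]^2$, so it is a group of symmetries of the square.

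None of these steps involves a real obstacle. The only point requiring care is keeping track of the reversed second coordinate in $\le_t$, in the first and third items of the order computations.
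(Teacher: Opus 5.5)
Your proposal is correct and takes essentially the paper's route: direct coordinate checks of involutivity and of how $\neg$ and $-$ act on $\le_t$ and $\le_k$. You also verify explicitly what the paper calls immediate, namely the geometric and group-theoretic claims (the paper only cites the fact that reflections in perpendicular lines compose to the reflection in their intersection), by computing $\neg(-a)=-(\neg a)=(1-\mu,1-\lambda)$ and checking that the four maps are distinct.

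One wording point: identifying the fixed line of an affine involution does not by itself make it an orthogonal reflection, since an oblique affine reflection fixes a line too. It is your orthogonality check on the difference vectors that carries the reflection claim, so the phrase ``it suffices to identify their fixed-point sets'' overstates what the fixed-point step does.
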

 
\begin{proof}
Both maps are involutive. If $a\le_t b$ then $\mu_1\le\mu_2$ and
$\lambda_1\ge\lambda_2$, so $\neg a=(\lambda_1,\mu_1)\ge_t(\lambda_2,\mu_2)=\neg b$, while $a\le_k b$ gives $\neg a\le_k\neg b$ directly. For conflation, $a\le_t b$ gives $1-\lambda_1\le1-\lambda_2$ and $1-\mu_1\ge1-\mu_2$, that is $-a\le_t-b$; and $a\le_k b$ gives $-a\ge_k-b$. The geometric descriptions are immediate, and the composite of two reflections in perpendicular lines is the reflection in their intersection.
\end{proof}
 
The vocabulary of bilattice theory names the two loci that the following sections turn on. An annotation $a$ is called \emph{exact} when $a=-a$ and \emph{consistent} when $a\le_k-a$ (\cite{fitting1991bilattices}). Unwinding the definitions, $a$ is exact precisely when $\mu+\lambda=1$ and consistent precisely when $\mu+\lambda\le1$; the exact annotations form the anti-diagonal, which is a copy of $[0,1]$ under $\le_t$ and on which $D_{ct}$ vanishes, and the consistent ones fill the half of the square below the anti-diagonal.
 
\begin{remark}[The fusion of the classical pipeline is consensus, not conjunction]
\label{rem:consensus}
Proposition~\ref{prop:bilattice} settles a point of terminology that Section~\ref{sec:classical} depends on. The componentwise minimum $(\min_k\mu_k,\min_k\lambda_k)$ used to fuse annotations in applications is $\bigotimes_k a_k$, the meet of the \emph{knowledge} order, which combines sources by retaining only what they agree on. The conjunction of annotated logic is $\bigwedge_{t}$, the meet of the \emph{truth} order, which takes the minimum of the degrees of belief and the \emph{maximum} of the degrees of disbelief. The two coincide only when all sources agree. Describing $\otimes$ as conjunction is therefore a misnomer, and not a harmless one: Corollary~\ref{cor:three-pairs} shows that under $\lambda_k=1-\mu_k$ the operator $\otimes$ returns a consistent annotation for every input, and $\wedge_t$ an exact one.
\end{remark}

The classification of an annotation is settled by three questions, each of which has a direct reading on the square and can be theorized in an abstract setting. In this paper, two of the three are fixed: the quadrants are cut by $\mu=\tfrac12$ and $\lambda=\tfrac12$, and each quadrant is split by the diagonals $\mu=\lambda$ and $\mu+\lambda=1$. The third is not. How far from the center an annotation must lie before an outright verdict is issued is not settled by the logic, and the answer may reasonably depend on the assessment practice of the curriculum. We therefore leave it as a parameter.

\begin{definition}[Extremity index]\label{def:extremity}
An \emph{extremity index} is a continuous function $E:[0,1]^2\to[0,\infty)$ such that $E(\tfrac12,\tfrac12)=0$ and, for every direction $w$ with $\|w\|=1$, the function $s\mapsto E\bigl((\tfrac12,\tfrac12)+sw\bigr)$ is non-decreasing on $\{s\ge0:(\tfrac12,\tfrac12)+sw\in[0,1]^2\}$. Together with a threshold $r>0$, it declares an annotation \emph{extreme} when $E(\mu,\lambda)\ge r$ and \emph{non-extreme} otherwise.
\end{definition}

The condition says that the sublevel sets $\{E<r\}$ are star-shaped neighborhoods of the center of the square. Nothing more is required: no convexity, no triangle inequality and no homogeneity, so $E$ need not be a norm. The \emph{canonical} choice, used in every figure of this article unless stated otherwise, is $E(\mu,\lambda)=|u|+|v|$ with $r=\tfrac12$, whose boundary is the diamond inscribed in the USCP with vertices $(\tfrac12,0)$, $(1,\tfrac12)$, $(\tfrac12,1)$ and $(0,\tfrac12)$.

\begin{definition}[Decision procedure]\label{def:decision}
Let $(\mu,\lambda)\in[0,1]^2$ and let $u,v$ be as in Definition~\ref{def:annotation}.

\noindent\textbf{Step 1 (the four logical states).}
The first applicable clause fixes the \emph{dominant} logical state:
\begin{itemize}
  \item $u\ge0$ and $v\le0$ \quad(high favorable, low unfavorable)\quad $\longrightarrow$\quad $t$;
  \item $u\le0$ and $v\ge0$ \quad(low favorable, high unfavorable)\quad $\longrightarrow$\quad $f$;
  \item $u>0$ and $v>0$ \quad(both high)\quad $\longrightarrow$\quad $\top$;
  \item $u<0$ and $v<0$ \quad(both low)\quad $\longrightarrow$\quad $\bot$.
\end{itemize}

\textbf{Step 2 (how far: extreme or not).} Fix an extremity index $E$ and a threshold $r>0$. The annotation is \emph{extreme} when
$$E(\mu,\lambda)\;\ge\;r ,$$
and in that case the logical state is the dominant state of Step~1: $t$, $f$, $\top$ or $\bot$. Under the canonical choice this reads $|u|+|v|\ge\tfrac12$, that is, the annotation lies on or outside the inscribed polygon.

\textbf{Step 3 (in which direction).}
If $E(\mu,\lambda)<r$ the annotation is \emph{non-extreme} and the dominant state is qualified by the sign of the remaining coordinate:
\begin{itemize}
  \item dominant $t$ or $f$: the state is $qt$ or $qf$ (\emph{quasi-true}, \emph{quasi-false}), tending to $\top$ if $\mu+\lambda\ge1$ and to $\bot$ if $\mu+\lambda<1$;
  \item dominant $\top$ or $\bot$: the state tends to $t$ if $\mu\ge\lambda$ and to $f$ if $\mu<\lambda$.
\end{itemize}
\end{definition}

\begin{figure}[htpb]
\centering
\begin{tikzpicture}[scale=5.2,
   axline/.style={-latex,gray!70,line width=0.4pt},
   cut/.style={line width=0.5pt,black},
   dia/.style={line width=0.9pt,black},
   lbl/.style={font=\footnotesize},
   num/.style={font=\scriptsize}]

  \draw[cut] (0,0) rectangle (1,1);

  \draw[axline] (-0.03,0) -- (1.12,0) node[right,lbl]{$\mu$};
  \draw[axline] (0,-0.03) -- (0,1.12) node[above,lbl]{$\lambda$};
  \draw (0.5,0) -- (0.5,-0.02) node[below,num]{$\tfrac12$};
  \draw (1,0) -- (1,-0.02) node[below,num]{$1$};
  \draw (0,0.5) -- (-0.02,0.5) node[left,num]{$\tfrac12$};
  \draw (0,1) -- (-0.02,1) node[left,num]{$1$};
  \node[num,below left] at (0,0) {$0$};

  \draw[cut] (0.5,0) -- (0.5,1);
  \draw[cut] (0,0.5) -- (1,0.5);
  \draw[cut] (0.25,0.25) -- (0.75,0.75);
  \draw[cut] (0.25,0.75) -- (0.75,0.25);

  \draw[dia] (0.5,0) -- (1,0.5) -- (0.5,1) -- (0,0.5) -- cycle;

  \node[lbl] at (0.16,0.16) {$\bot$};
  \node[lbl] at (0.84,0.16) {$t$};
  \node[lbl] at (0.16,0.84) {$f$};
  \node[lbl] at (0.84,0.84) {$\top$};

  \node[num] at (0.30,0.62) {1};
  \node[num] at (0.38,0.70) {2};
  \node[num] at (0.62,0.70) {3};
  \node[num] at (0.70,0.62) {4};
  \node[num] at (0.70,0.38) {5};
  \node[num] at (0.62,0.30) {6};
  \node[num] at (0.38,0.30) {7};
  \node[num] at (0.30,0.38) {8};

\end{tikzpicture}
\caption{The twelve logical states on the USCP. The horizontal axis is the degree of favorable evidence $\mu$ and the vertical axis the degree of unfavorable evidence $\lambda$. The extremity boundary is drawn for the canonical choice $E=|u|+|v|$, $r=\tfrac12$: the bold diamond is the locus $|\mu-\tfrac12|+|\lambda-\tfrac12|=\tfrac12$: annotations on or outside it are extreme and receive one of the four states $t$, $f$, $\top$, $\bot$ shown in the corner triangles; annotations strictly inside it are non-extreme and receive one of the eight states numbered $1$--$8$, identified in Table~\ref{tab:states}. The four segments meeting at the center are $\mu=\tfrac12$, $\lambda=\tfrac12$, $\mu=\lambda$ and $\mu+\lambda=1$}
\label{fig:uscp-regions}
\end{figure}

Figure~\ref{fig:uscp-regions} makes the three steps of Definition~\ref{def:decision} visible at once. The two segments $\mu=\tfrac12$ and $\lambda=\tfrac12$ split the square into the four quadrants of Step~1, and each quadrant carries one logical state: the lower right quadrant, where favorable evidence is high and unfavorable evidence is low, carries truth; the upper left quadrant carries falsity; the upper right quadrant, where both kinds of evidence are high, carries inconsistency; and the lower left quadrant, where both are low, carries paracompleteness. The bold diamond is the threshold of Step~2: an annotation lying on or outside it is far enough from the center for its state to be asserted outright, and the four corner triangles it cuts off are precisely the four extreme states. Inside the diamond the evidence is too weak for an outright verdict, and the two remaining segments $\mu=\lambda$ and $\mu+\lambda=1$ divide each quadrant into the two sectors of Step~3, according to the direction in which the annotation leans.

Reading the figure clockwise from the top left therefore recovers the eight non-extreme states in the order $1,\dots,8$ of Table~\ref{tab:states}. Two features of the diagram deserve emphasis for the discussion that follows. First, the anti-diagonal $\mu+\lambda=1$ separates the annotations that carry an excess of evidence, above it, from those that carry a deficit, below it; the five states at or above the anti-diagonal are exactly those that the classical pipeline cannot reach in the non-degenerate case (Corollary~\ref{cor:unreachable}). Second, no student is ever left unclassified: by Proposition~\ref{prop:partition} the twelve regions cover the square without overlap, so every pair $(\mu,\lambda)$ receives exactly one state and one corresponding pedagogical reading.

\begin{table}[htpb]
\centering
\caption{The twelve logical states, their defining conditions and their reading for a student with annotation $(\mu,\lambda)$. Regions $1$--$8$ are numbered as in Figure~\ref{fig:uscp-regions}; $E$ is the extremity index and $r$ the threshold of Definition~\ref{def:extremity}, whose canonical values are $E=|u|+|v|$ and $r=\tfrac12$; the conditions are stated on the unit square, and Remark~\ref{rem:scaling} gives the corresponding thresholds on the $[0,10]$ axes used in Section~\ref{sec:results}.}
\label{tab:states}
\small
\begin{tabular}{@{}llll@{}}
\toprule
Region & Conditions & State & Reading \\
\midrule
--- & $E\ge r$, $\mu\ge\tfrac12$, $\lambda\le\tfrac12$
    & $t$ & outcomes attained \\
--- & $E\ge r$, $\mu\le\tfrac12$, $\lambda\ge\tfrac12$
    & $f$ & outcomes not attained \\
--- & $E\ge r$, $\mu>\tfrac12$, $\lambda>\tfrac12$
    & $\top$ & strong exams, weak engagement \\
--- & $E\ge r$, $\mu<\tfrac12$, $\lambda<\tfrac12$
    & $\bot$ & weak exams, strong engagement \\
\midrule
1 & $E<r$, $\mu\le\tfrac12$, $\lambda\ge\tfrac12$, $\mu+\lambda<1$
  & $qf\!\to\!\bot$ & leaning negative, evidence lacking \\
2 & $E<r$, $\mu\le\tfrac12$, $\lambda>\tfrac12$, $\mu+\lambda\ge1$
  & $qf\!\to\!\top$ & leaning negative, evidence conflicting \\
3 & $E<r$, $\mu>\tfrac12$, $\lambda>\tfrac12$, $\mu<\lambda$
  & $\top\!\to\!f$ & conflicting, engagement dominating \\
4 & $E<r$, $\mu>\tfrac12$, $\lambda>\tfrac12$, $\mu\ge\lambda$
  & $\top\!\to\!t$ & conflicting, exams dominating \\
5 & $E<r$, $\mu\ge\tfrac12$, $\lambda\le\tfrac12$, $\mu+\lambda\ge1$
  & $qt\!\to\!\top$ & leaning positive, evidence conflicting \\
6 & $E<r$, $\mu\ge\tfrac12$, $\lambda\le\tfrac12$, $\mu+\lambda<1$
  & $qt\!\to\!\bot$ & leaning positive, evidence lacking \\
7 & $E<r$, $\mu<\tfrac12$, $\lambda<\tfrac12$, $\mu\ge\lambda$
  & $\bot\!\to\!t$ & evidence lacking, exams dominating \\
8 & $E<r$, $\mu<\tfrac12$, $\lambda<\tfrac12$, $\mu<\lambda$
  & $\bot\!\to\!f$ & evidence lacking, engagement dominating \\
\bottomrule
\end{tabular}
\end{table}

Read on the square, the procedure says: the quadrant tells which of the four logical states is in play; the distance from the center tells whether the evidence is strong enough for that state to be asserted outright; and, when it is not, the remaining coordinate tells in which direction the annotation leans. In particular, the diamond drawn in every figure of this article is not decoration: it is the boundary between extreme and non-extreme states under the canonical choice of Definition~\ref{def:extremity}.

\begin{example}[A student, step by step]\label{ex:student}
Consider a student whose four summative scores, on the $[0,10]$ scale, are $9.0$, $8.6$, $8.0$ and $8.2$, and whose three formative scores are $3.0$, $2.4$ and $3.6$. Taking the weights to be uniform, for the sake of a self-contained illustration (the weighting of Remark~\ref{rem:weights} is defined at cohort level and does not apply to a single student in isolation), the two averages are $\overline{x}=0.845$ and $\overline{y}=0.300$, so by Definition~\ref{def:generalized} the annotation is
$$(\mu,\lambda)=(0.845,\;0.700).$$

\emph{Step 1.} Both coordinates exceed $\tfrac12$, so the dominant state is $\top$: the evidence for and the evidence against are simultaneously strong.
 
\emph{Step 2.} The canonical extremity index gives
$E=|0.845-\tfrac12|+|0.700-\tfrac12|=0.545\ge\tfrac12$, so the annotation is extreme and the state is $\top$ outright, with no qualifying tendency.
 
\emph{Reading.} By Proposition~\ref{prop:reading}, $D_{ct}=\overline{x}-\overline{y}=0.545$: the gap between examination performance and course engagement is more than half the scale. The student is the case that Table~\ref{tab:states} records as strong examinations with weak engagement, and the appropriate response concerns participation rather than academic support.
 
Running the classical pipeline on the same student is revealing. Treating each of the seven instruments as a source with $\lambda_k=1-\mu_k$ and fusing by the componentwise minimum gives $\mu^{\ast}=0.240$ and $\lambda^{\ast}=0.100$, whence $\mu^{\ast}+\lambda^{\ast}=0.340\le1$, as Corollary~\ref{cor:unreachable} requires. The resulting state is $\bot$: the pipeline reports an absence of evidence about a student on whom seven separate instruments have in fact reported. The contradiction that the generalized construction makes visible is not weakened by the classical pipeline --- it is relocated to the opposite corner of the square.
\end{example}

\begin{proposition}[Partition]\label{prop:partition}
The procedure of Definition~\ref{def:decision} provides a function $\sigma:[0,1]^2\to S$ where
$$S=\{\,t,\;f,\;\top,\;\bot,\;qt\!\to\!\top,\;qt\!\to\!\bot,\;
      qf\!\to\!\top,\;qf\!\to\!\bot,\;\top\!\to\!t,\;\top\!\to\!f,\;
      \bot\!\to\!t,\;\bot\!\to\!f\,\}.$$ 
    Consequently $\{\sigma^{-1}(\varsigma)\}_{\varsigma\in S}$ is a partition of $[0,1]^2$ into at most twelve regions. If, in addition, the extremity index $E$ is positive on the four corners $(0,0),(1,0),(0,1),(1,1)$ and $0<r<\min\{E(0,0),E(1,0),E(0,1),E(1,1)\}$, then the twelve regions are non-empty. Under the canonical choice, each region is moreover the intersection of $[0,1]^2$ with finitely many half-planes, distinct regions have disjoint interiors, and their closures are the twelve closed regions of Figure~\ref{fig:uscp-regions}.
\end{proposition}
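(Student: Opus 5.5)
The argument has three layers: well-definedness of $\sigma$ (which gives the partition at once), non-emptiness under the corner hypothesis, and the polyhedral description in the canonical case.

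\emph{Well-definedness.} Step~1 is a first-applicable-clause rule, so it suffices to show that some clause applies to every $(u,v)\in[-\tfrac12,\tfrac12]^2$.
\begin{itemize}
  \item If $u\ge0$, then either $v\le0$ and the clause for $t$ fires, or $v>0$. In the latter case, $u>0$ fires $\top$, while $u=0$ fires $f$ because $u\le0$ and $v\ge0$.
  \item If $u<0$, then either $v\ge0$ and $f$ fires, or $v<0$ and $\bot$ fires.
\end{itemize}
The dominant state is therefore a total function. Step~2 is a dichotomy on $E\ge r$. In Step~3 each dominant state is split by a single dichotomy: $\mu+\lambda\ge1$ versus $<1$, or $\mu\ge\lambda$ versus $<$. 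So every annotation receives exactly one label in $S$, which defines $\sigma$. The fibres of any function are pairwise disjoint and cover the domain, so the nonempty fibres form a partition into at most $|S|=12$ blocks.

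\emph{Non-emptiness.} Assume $E$ is positive at the corners and $r$ is below the smallest corner value.
\begin{itemize}
  \item \emph{Extreme states.} Each corner is extreme because $E(\text{corner})>r$. Step~1 sends $(1,0)$ to $t$, $(0,1)$ to $f$, $(1,1)$ to $\top$ and $(0,0)$ to $\bot$.
  \item \emph{Non-extreme states.} Continuity of $E$ at the center, where $E(\tfrac12,\tfrac12)=0<r$, gives a ball $B_\delta$ around $(\tfrac12,\tfrac12)$ on which $E<r$. Inside $B_\delta$ I would exhibit one explicit witness per region, taking $\varepsilon$ small with $\varepsilon<\delta/2$:
\end{itemize}
\begin{center}
\begin{tabular}{ll}
$(u,v)=(\varepsilon,-\varepsilon/2)$ & $qt\!\to\!\top$ \\
$(u,v)=(\varepsilon/2,-\varepsilon)$ & $qt\!\to\!\bot$ \\
$(u,v)=(-\varepsilon/2,\varepsilon)$ & $qf\!\to\!\top$ \\
$(u,v)=(-\varepsilon,\varepsilon/2)$ & $qf\!\to\!\bot$ \\
$(u,v)=(2\varepsilon,\varepsilon)$ & $\top\!\to\!t$ \\
$(u,v)=(\varepsilon,2\varepsilon)$ & $\top\!\to\!f$ \\
$(u,v)=(-\varepsilon,-2\varepsilon)$ & $\bot\!\to\!t$ \\
$(u,v)=(-2\varepsilon,-\varepsilon)$ & $\bot\!\to\!f$
\end{tabular}
\end{center}
Each witness has both coordinates nonzero, so its Step~1 clause is unambiguous, and its sign of $u+v$ (equivalently of $\mu+\lambda-1$) or of $u-v$ is strict, so its Step~3 clause is unambiguous. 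Star-shapedness is not needed for this part; continuity at the center suffices.

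\emph{Canonical case.} With $E=|u|+|v|$, each region is cut out by the following conditions:
\begin{itemize}
  \item a sign condition on $u$ and on $v$;
  \item within the fixed quadrant, $|u|+|v|$ is a single linear form such as $u-v$, so the condition $E\ge\tfrac12$ or $E<\tfrac12$ is one half-plane;
  \item possibly one condition on $\mu+\lambda$ or on $\mu-\lambda$.
\end{itemize}
Each region is therefore $[0,1]^2$ intersected with finitely many open or closed half-planes. Replacing every strict inequality by the non-strict one gives a closed convex polygon, namely one of the corner triangles or one of the eight small triangles in Figure~\ref{fig:uscp-regions}. For instance, region~6 closes to the triangle with vertices $(\tfrac12,\tfrac12)$, $(\tfrac12,0)$ and $(\tfrac34,\tfrac14)$.

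Two facts remain: the closure of each region is exactly that polygon, and the interiors are disjoint.
\begin{itemize}
  \item \emph{Closure.} The region contains the interior of its polygon, and the interior of a nonempty convex polygon is dense in it, so the closure equals the polygon.
  \item \emph{Disjoint interiors.} The interiors of distinct regions are already disjoint because the regions themselves are disjoint. Equivalently, the interiors of the twelve polygons are disjoint, since any two polygons lie on opposite sides of one of the lines $u=0$, $v=0$, $u=\pm v$ or $|u|+|v|=\tfrac12$.
\end{itemize}

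The main obstacle is the bookkeeping of boundary points, where the first-applicable rule breaks ties. An example is $u=0$ with $v>0$, which goes to $f$ rather than $\top$; consistently, the $f$ rows of Table~\ref{tab:states} include the case $\mu=\tfrac12$. Checking that such points land in the closure of the region they are assigned to is what secures the final identification with Figure~\ref{fig:uscp-regions}. Totality, by contrast, is immediate from the case analysis above.
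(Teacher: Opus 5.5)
Your argument is sound. Apart from the totality check, which is the paper's case analysis run forwards instead of by contradiction, it takes a more explicit route than the paper.

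\textbf{Non-emptiness.} The paper argues only topologically: $\{E<r\}$ is an open neighborhood of the center that meets all eight sectors, and each corner has a neighborhood on which $E>r$. You observe that the corners themselves are extreme witnesses. You also write down eight concrete points near the center, which makes the sector-meeting claim checkable.

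\textbf{Canonical case.} The paper invokes $\max\{|D_c|,|D_{ct}|\}=|u|+|v|$ and $|D_c|\ge|D_{ct}|\iff uv\le0$. These give an interpretation: the $t$ and $f$ quadrants are where certainty dominates contradiction. They are not needed for the polyhedral claim, however, and your remark that $|u|+|v|$ is linear on each closed quadrant suffices. Your density-of-interior argument also proves the closure statement, which the paper only asserts.

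Two slips need fixing.

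First, the only tie in Step~1 is the center $u=v=0$, where the clauses for $t$ and $f$ both hold. The case $u=0$, $v>0$ is not a tie, since the $\top$ clause needs $u>0$. Because the center goes to $t$, the dominant-$f$ set is $\{u\le0,\,v\ge0\}\setminus\{(0,0)\}$, not a pair of sign conditions. Your recipe for region~2 ($qf\!\to\!\top$) therefore yields a set containing the center, which actually lies in region~5 ($qt\!\to\!\top$). The fiber is still polyhedral: replace $v\ge0$ by $v>0$, i.e.\ $\lambda>\tfrac12$, as in row~2 of Table~\ref{tab:states}. As written, though, the step describes the wrong set. The paper's one-line ``each test compares two affine functions'' skips the same point about negated earlier clauses.

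Second, for a Euclidean ball, $\varepsilon<\delta/2$ does not put $(2\varepsilon,\varepsilon)$ in $B_\delta$, since its norm is $\sqrt5\,\varepsilon$. Take $\varepsilon<\delta/\sqrt5$, or use the sup-norm ball.
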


\begin{proof}
Step~1 is exclusive and exhaustive. Indeed, the clauses of Step~1 are applied in order, so exclusiveness is automatic; for exhaustiveness, suppose neither of the first two clauses applies. If $u=0$ then $v\le0$ fails, so $v>0$, and then $u\le0$ and $v\ge0$ hold, so the second clause would have applied --- a contradiction. Hence $u\neq0$, and symmetrically $v\neq0$. Since neither of the first two clauses applies, $u$ and $v$ have the same sign, so exactly one of the last two clauses applies. Steps~2 and~3 are dichotomies: the first is $E\ge r$ or $E<r$, the second a comparison of signs. Therefore $\sigma$ is a function, and its fibers partition $[0,1]^2$, whatever $E$ and $r$ may be. 

Since $E$ is continuous and $E(\tfrac12,\tfrac12)=0$, the sublevel set $\{E<r\}$ is an open neighborhood of the center and intersects each of the eight non-extreme sectors. For each corner $c$, continuity of $E$ and $E(c)>r$ give a neighborhood of $c$ on which $E>r$, intersecting the corresponding extreme quadrant. Hence all twelve regions are non-empty.

Under the canonical choice, the fibers are the twelve regions of the figure. Here each test compares two affine functions of $(\mu,\lambda)$, so each fiber is a finite intersection of half-planes. Write $d=D_c=u-v$ and $s=D_{ct}=u+v$. Two identities do all the work:
$$
  \max\{|d|,|s|\}=|u|+|v|,
  \qquad
  |d|\ge|s|\mbox{ if, and only if, } uv\le0 .
$$
The first holds because $\max\{|u-v|,|u+v|\}=|u|+|v|$ for all reals; the second because $d^2-s^2=-4uv$. By the second identity, the two certainty quadrants of Step~1 are precisely the region where certainty dominates contradiction, and the two remaining quadrants are precisely the region where contradiction dominates. By the first identity, Step~2 says exactly that $\max\{|d|,|s|\}\ge\tfrac12$, which cuts off the four corner triangles of the USCP. The four lines $u=0$, $v=0$, $u=v$ and $u=-v$ then divide the remaining diamond into the eight sectors distinguished in Step~3, one for each pairing of a dominant state with a tendency. Every state is attained because each of the twelve regions so obtained has non-empty interior.
\end{proof}

The proof of the first half uses nothing about the shape of $E$, only that Step~2 is a dichotomy. The partition is therefore guaranteed for any choice of boundary, and that choice can be made on pedagogical grounds without any risk to the classification.

\begin{example}[The $\ell^p$ family]\label{ex:lp}
A convenient one-parameter family of extremity indices is
$$E_p(\mu,\lambda)=\bigl(|u|^p+|v|^p\bigr)^{1/p},\qquad p\in(0,\infty],$$
with $E_\infty=\max\{|u|,|v|\}$. The canonical choice is $p=1$. For $p>1$ the boundary bulges outwards and fewer annotations are extreme; for $p<1$ it curves inwards and more are, the index then failing the triangle inequality while remaining an extremity index in the sense of Definition~\ref{def:extremity}. A further natural choice, outside this family, is $E_{\min}(\mu,\lambda)=\min\{|u|,|v|\}$, whose extreme region consists of four squares at the corners of the USCP. Figure~\ref{fig:extremity-family} shows four cases.
\end{example}

\begin{figure}[htpb]
    \centering
    \includegraphics[width=\linewidth]{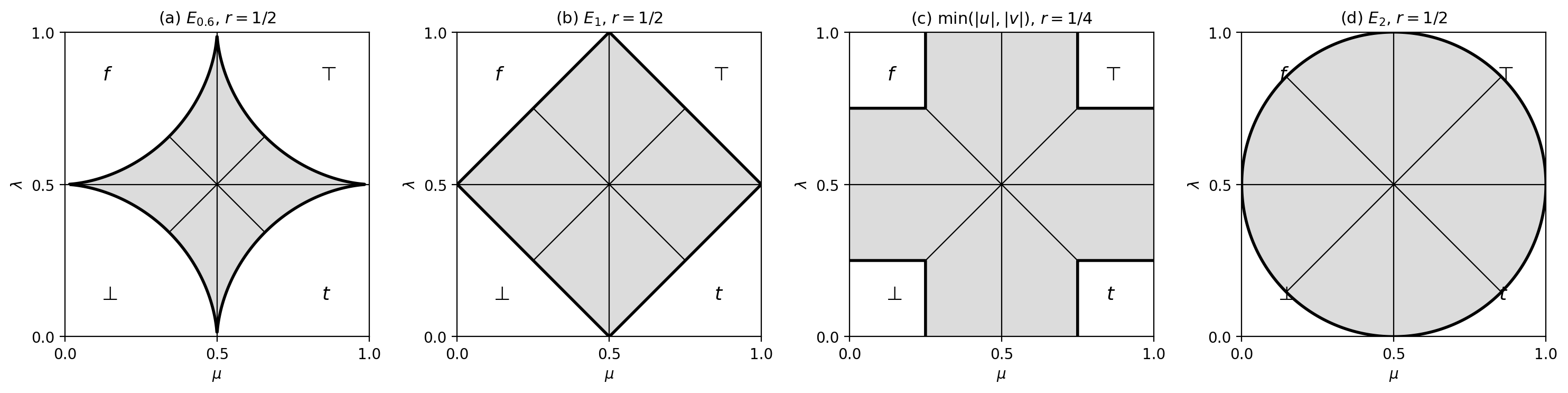}
    \caption{Four extremity boundaries on the USCP, with the non-extreme region shaded. (a) $E_{0.6}$ with $r=\tfrac12$, whose boundary curves inwards; (b) the canonical choice $E_1$ with $r=\tfrac12$; (c) $E_{\min}$ with $r=\tfrac14$, whose extreme region consists of four corner squares; (d) $E_2$ with $r=\tfrac12$. Steps~1 and~3 are unaffected: the quadrants and the two diagonals are the same in all four cases}
    \label{fig:extremity-family}
\end{figure}

\begin{remark}[Calibration]\label{rem:calibration}
The index and the threshold must be chosen together, since they jointly determine how demanding the classification is. With $r=\tfrac12$ and the $\ell^p$ family, the proportion of the USCP in which an outright verdict is issued is about three quarters at $p=0.6$, one half at $p=1$ and about one fifth at $p=2$, and it vanishes at $p=\infty$; with $E_{\min}$ and $r=\tfrac14$ it is one quarter. The choice is a modeling decision, and it should be stated explicitly and justified against the assessment practice of the curriculum.
\end{remark}

\begin{remark}[Corner-shaped boundaries]\label{rem:corner}
Definition~\ref{def:extremity} requires only that $E$ be non-decreasing along rays. The index $E_{\min}$ satisfies this but not the strict version: it is constant along the two median lines $\mu=\tfrac12$ and $\lambda=\tfrac12$. The consequence is behavioral rather than technical. Under $E_{\min}$ an annotation on a median line is never extreme, however far from the center it lies: a student with maximal summative performance and exactly average engagement receives a qualified rather than an outright verdict. Indices satisfying the strict condition, such as the $\ell^p$ family, do not have this feature. Whether it is a defect depends on whether a curriculum wishes outright verdicts to require both dimensions to be decisive, which is what $E_{\min}$ encodes.
\end{remark}

The two involutions of Definition~\ref{def:involutions} descend to the twelve states, provided the extremity boundary respects them.
 
\begin{definition}[Symmetric extremity index]\label{def:symmetric}
An extremity index $E$ is \emph{symmetric} when $E(\neg a)=E(a)$ and $E(-a)=E(a)$ for every annotation $a$.
\end{definition}
 
Every index considered in this article is symmetric: the members $E_p$ of the $\ell^p$ family and $E_{\min}$ all depend on $(\mu,\lambda)$ only through the unordered pair $\{|u|,|v|\}$, which both reflections preserve.
 
\begin{proposition}[The negation permutes the states]\label{prop:neg-states}
Let $E$ be a symmetric extremity index and let $\sigma$ be the classification of Definition~\ref{def:decision}. Define $\pi_{\neg}:S\to S$ by
\begin{align*}
    &t\leftrightarrow f,\qquad \top\mapsto\top,\qquad \bot\mapsto\bot,\\
    &qt\!\to\!\top\;\leftrightarrow\;qf\!\to\!\top,\qquad qt\!\to\!\bot\;\leftrightarrow\;qf\!\to\!\bot,\\
  &\top\!\to\!t\;\leftrightarrow\;\top\!\to\!f,\qquad
  \bot\!\to\!t\;\leftrightarrow\;\bot\!\to\!f.
\end{align*}
Then $\sigma(\neg a)=\pi_{\neg}\bigl(\sigma(a)\bigr)$ for every annotation $a$ with $\mu\neq\lambda$. Dually, with
\begin{align*}
    \pi_{-}:&\;t\mapsto t,\quad f\mapsto f,\quad \top\leftrightarrow\bot,\\
  &qt\!\to\!\top\leftrightarrow qt\!\to\!\bot,\quad
  qf\!\to\!\top\leftrightarrow qf\!\to\!\bot,\\
  &\top\!\to\!t\leftrightarrow\bot\!\to\!t,\qquad
  \top\!\to\!f\leftrightarrow\bot\!\to\!f,
\end{align*}
one has $\sigma(-a)=\pi_{-}\bigl(\sigma(a)\bigr)$ for every $a$ with $\mu+\lambda\neq1$. Both $\pi_{\neg}$ and $\pi_{-}$ are involutions of $S$, and they commute.
\end{proposition}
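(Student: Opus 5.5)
The strategy is to express both involutions in the centered coordinates $(u,v)$ of Definition~\ref{def:annotation}. Negation $\neg(\mu,\lambda)=(\lambda,\mu)$ becomes $(u,v)\mapsto(v,u)$, and conflation $-(\mu,\lambda)=(1-\lambda,1-\mu)$ becomes $(u,v)\mapsto(-v,-u)$. In terms of $d=u-v$ and $s=u+v$ (the abbreviations from the proof of Proposition~\ref{prop:partition}), negation sends $(d,s)\mapsto(-d,s)$ and conflation sends $(d,s)\mapsto(d,-s)$. Since $E$ is symmetric (Definition~\ref{def:symmetric}), $E(\neg a)=E(-a)=E(a)$, so Step~2 gives the same extreme/non-extreme verdict for $a$, $\neg a$ and $-a$. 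It therefore suffices to track Step~1 and Step~3 separately, and then to read off the map on states.

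\emph{Step 1 under negation.} I would show that the dominant state of $\neg a$ is obtained from that of $a$ by swapping $t$ and $f$ and fixing $\top$ and $\bot$. For the open quadrants this is immediate: $u>0,v<0$ goes to $u'<0,v'>0$, and both-positive or both-negative is preserved. The real work is on the boundary rays, because Step~1 is ordered and the first two clauses overlap. A point with $u=0$, $v<0$ is $t$ by the first clause, and its image $(v,0)$ fails the first clause but satisfies the second, so it is $f$. A point with $u>0$, $v=0$ is $t$, and its image $(0,u)$ is $f$ by the second clause. The symmetric cases for $f$ are analogous. The only point where the clauses genuinely collide is the center. That is the main delicate step, and it is excluded by the hypothesis $\mu\neq\lambda$.

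\emph{Step 3 under negation.} In the $t/f$ sectors the tendency is decided by the sign test $s\ge0$, and $s$ is unchanged by negation. In the $\top/\bot$ sectors the tendency is decided by $d\ge0$ versus $d<0$, and negation flips the sign of $d$. Because $d\neq0$ by hypothesis, the non-strict/strict asymmetry of the test causes no trouble, and $\top\!\to\!t\leftrightarrow\top\!\to\!f$ and $\bot\!\to\!t\leftrightarrow\bot\!\to\!f$. Combining the three steps gives exactly $\pi_\neg$.

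\emph{Conflation.} The argument is parallel. The map $(u,v)\mapsto(-v,-u)$ sends $\{u\ge0,v\le0\}$ to itself and likewise the $f$ quadrant, while it swaps the both-positive and both-negative quadrants. On boundary rays there is again no collision away from the center, which is excluded since $s\neq0$ there. In Step~3, $d$ is preserved, so the test $\mu\ge\lambda$ persists and $\top\!\to\!t\leftrightarrow\bot\!\to\!t$, including the case $u=v$. The sign of $s$ flips, and $s\neq0$ by hypothesis, so $qt\!\to\!\top\leftrightarrow qt\!\to\!\bot$ and similarly for $qf$.

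\emph{The maps on $S$.} Both $\pi_\neg$ and $\pi_-$ are products of disjoint transpositions and fixed points, hence involutions. Commutation can be checked directly on the twelve elements, noting that each composite sends $t\leftrightarrow f$, $\top\leftrightarrow\bot$, $qt\!\to\!\top\leftrightarrow qf\!\to\!\bot$, and so on. Alternatively, both maps act independently on the labels (truth side versus knowledge side), which mirrors the Klein four-group of Proposition~\ref{prop:involutions}.
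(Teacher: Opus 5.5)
Your proposal is correct and follows essentially the same route as the paper. Both arguments rewrite $\neg$ and $-$ as $(u,v)\mapsto(v,u)$ and $(u,v)\mapsto(-v,-u)$, use the symmetry of $E$ for Step~2, check that the ordered Step~1 clauses transform correctly (the only collision being the center, which the hypothesis excludes), and use the invariance or sign reversal of $\mu+\lambda$ and $\mu-\lambda$ in Step~3. Your explicit boundary-ray cases and your remark that $\pi_\neg$ and $\pi_-$ apply the commuting transpositions $(t\,f)$ and $(\top\,\bot)$ letterwise to the labels only spell out what the paper reads off directly.
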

 
\begin{proof}
Write $u=\mu-\tfrac12$ and $v=\lambda-\tfrac12$, so that $\neg$ acts as $(u,v)\mapsto(v,u)$ and $-$ as $(u,v)\mapsto(-v,-u)$.
 
For the negation, assume $\mu\neq\lambda$, that is $u\neq v$. In Step~1 the clauses for $t$ and for $f$ are exchanged by $(u,v)\mapsto(v,u)$, and the clauses for $\top$ and for $\bot$ are each invariant; the two first clauses overlap only at $u=v=0$, which is excluded, so the dominant state is transformed as stated. Step~2 is unchanged because $E$ is symmetric. In Step~3, the quantity $\mu+\lambda$ is invariant under $\neg$, so a dominant $t$ becoming a dominant $f$ carries its tendency unchanged, giving $qt\!\to\!\top\mapsto qf\!\to\!\top$ and $qt\!\to\!\bot\mapsto qf\!\to\!\bot$; whereas $\mu-\lambda$ changes sign, so a dominant $\top$ or $\bot$ keeps its dominant state and reverses its tendency, the tie $\mu=\lambda$ being excluded by hypothesis.
 
For the conflation, assume $\mu+\lambda\neq1$. Step~1: $(u,v)\mapsto(-v,-u)$ fixes the clauses for $t$ and $f$ and exchanges those for $\top$ and $\bot$. Step~2 is again unchanged. In Step~3, $\mu-\lambda$ is invariant under $-$, so a dominant $\top$ or $\bot$ swaps while keeping its tendency; and $\mu+\lambda$ is replaced by $2-(\mu+\lambda)$, so the comparison with $1$ is reversed and a dominant $t$ or $f$ keeps its dominant state while reversing its tendency, the tie $\mu+\lambda=1$ being excluded.
 
That $\pi_{\neg}$ and $\pi_{-}$ are commuting involutions is read off the displayed assignments.
\end{proof}
 
\begin{corollary}\label{cor:orbits}
The group $\{\mathrm{id},\pi_{\neg},\pi_{-},\pi_{\neg}\pi_{-}\}$ acts on the twelve logical states with four orbits:
\begin{align*}
    &\{t,f\},\qquad \{\top,\bot\},\\
    &\{qt\!\to\!\top,\;qt\!\to\!\bot,\;qf\!\to\!\top,\;qf\!\to\!\bot\},\\
    &\{\top\!\to\!t,\;\top\!\to\!f,\;\bot\!\to\!t,\;\bot\!\to\!f\},
\end{align*}
of sizes $2$, $2$, $4$ and $4$.
\end{corollary}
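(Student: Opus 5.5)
The corollary is a statement about the abstract group generated by $\pi_{\neg}$ and $\pi_{-}$ acting on the finite set $S$. The plan is to read everything off the explicit assignments in Proposition~\ref{prop:neg-states}, with no further appeal to the geometry.

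\emph{The group.} By Proposition~\ref{prop:neg-states}, $\pi_{\neg}$ and $\pi_{-}$ are commuting involutions of $S$. Hence $\{\mathrm{id},\pi_{\neg},\pi_{-},\pi_{\neg}\pi_{-}\}$ is closed under composition and inverses, so it is a group. It has order four, and is therefore a Klein four-group, because the three non-identity elements are pairwise distinct as permutations. For instance, $\pi_{\neg}$ moves $t$ while $\pi_{-}$ fixes it, and $\pi_{\neg}\pi_{-}$ sends $t\mapsto f$ and $\top\mapsto\bot$. Since the group acts on $S$, its orbits partition $S$, and it suffices to compute the orbit of one representative from each proposed class.

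\emph{The orbits.} I would compute each orbit by listing the images of a representative under $\pi_{\neg}$, $\pi_{-}$ and the composite.
\begin{itemize}
\item For $t$: $\pi_{\neg}(t)=f$, $\pi_{-}(t)=t$, and $\pi_{\neg}\pi_{-}(t)=f$. The orbit is $\{t,f\}$.
\item For $\top$: $\pi_{\neg}$ fixes it and $\pi_{-}$ sends it to $\bot$. The orbit is $\{\top,\bot\}$.
\item For $qt\!\to\!\top$: $\pi_{\neg}$ gives $qf\!\to\!\top$, $\pi_{-}$ gives $qt\!\to\!\bot$, and the composite gives $qf\!\to\!\bot$. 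The orbit is these four states.
\item For $\top\!\to\!t$: the images are $\top\!\to\!f$, $\bot\!\to\!t$ and $\bot\!\to\!f$. The orbit is these four states.
\end{itemize}
The sizes are $2+2+4+4=12=|S|$. So the four listed sets are disjoint and exhaust $S$, and there are no further orbits.

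\emph{Main obstacle.} The only delicate point is that this is an action on $S$ by permutations, not on the square. The hypotheses $\mu\neq\lambda$ and $\mu+\lambda\neq1$ in Proposition~\ref{prop:neg-states} concern how the action relates to $\sigma$, and they play no role in the orbit computation. I would state this explicitly. I would then note, as an optional remark, that by Proposition~\ref{prop:neg-states} each orbit is realized geometrically by the Klein group of Proposition~\ref{prop:involutions} away from the two diagonals.
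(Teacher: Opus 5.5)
Your proposal is correct and follows essentially the same route as the paper, which also reads the orbits directly off the explicit assignments of Proposition~\ref{prop:neg-states}: the extreme states split into a pair exchanged by $\pi_{\neg}$ and a pair exchanged by $\pi_{-}$, and each non-extreme family is permuted simply transitively. Your per-representative computations and the count $2+2+4+4=12$ spell that reading out in more detail.
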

 
\begin{proof}
Immediate from the assignments of Proposition~\ref{prop:neg-states}: the four extreme states split into the pair exchanged by $\pi_{\neg}$ and fixed by $\pi_{-}$ and the pair exchanged by $\pi_{-}$ and fixed by $\pi_{\neg}$, while each of the two families of non-extreme states is permuted simply transitively.
\end{proof}
 
\begin{remark}[The excluded lines]\label{rem:ties}
The two exceptions in Proposition~\ref{prop:neg-states} are exactly the fixed-point sets of the involutions: $\neg$ fixes the diagonal $\mu=\lambda$ and $-$ fixes the anti-diagonal $\mu+\lambda=1$. On those lines a point is its own image, so it cannot receive two different states, and the tie-breaking conventions of Definition~\ref{def:decision} --- which resolve $\mu\ge\lambda$ in favor of a tendency toward $t$, and $\mu+\lambda\ge1$ in favor of a tendency toward $\top$ --- decide the matter asymmetrically. The phenomenon is confined to two line segments of measure zero and does not affect any classification of a cohort; it could be removed at the cost of adding two boundary states, which we have preferred not to do.
\end{remark}
 
\begin{remark}[Reading]\label{rem:involutions-reading}
In the educational setting the two involutions have distinct meanings. Negation exchanges the roles of the two families of instruments, sending a student whose summative average is $\overline{x}$ and formative average $\overline{y}$ to a hypothetical student with the two figures exchanged; by Proposition~\ref{prop:reading} it reverses the sign of the degree of contradiction and leaves the degree of certainty unchanged, so it distinguishes the student who performs without engaging from the one who engages without performing. Conflation leaves the gap between the two averages untouched and reverses the overall level, so it maps a well-evidenced profile to a poorly evidenced one of the same character. Corollary~\ref{cor:orbits} then says that the twelve states are organized into four groups: the two verdicts, the two failures of evidence, and two quadruples in which each combination of a verdict with a failure occurs exactly once.
\end{remark}

\section{The Generalized Approach}\label{sec:generalized}

We propose a generalized approach that formally decouples the sources of evidence. In this approach, the degree of belief $\mu$ is strictly derived from cognitive proficiency, measured by summative assessments. Independently, the degree of disbelief $\lambda$ is derived from behavioral and attitudinal engagement, measured by formative evaluations. The two coordinates are therefore not two readings of one body of evidence, as in Section~\ref{sec:classical}, but two readings of two distinct bodies of evidence about the same proposition $P_i$.

Let $S=\{X_1,\dots,X_n\}$ be a set of summative variables (e.g., exams in Mathematics, Physics, Biology) and $F=\{Y_1,\dots,Y_m\}$ a set of formative variables (e.g., Teamwork, Prior Preparation, Class Engagement), the two sets being disjoint: no instrument contributes to both. For a given student $i$, let $x_{ij}$ and $y_{ik}$ denote the respective scores, strictly normalized to the unit interval $[0,1]$. To account for the relative importance of different disciplines or pedagogical criteria, we designate non-negative weights $p_1,\dots,p_n$ and $q_1,\dots,q_m$, with $\sum_j p_j>0$ and $\sum_k q_k>0$.

\begin{definition}[Generalized annotation]\label{def:generalized}
Write
$$\overline{x}_i=\frac{\sum_{j=1}^{n}p_j\,x_{ij}}{\sum_{j=1}^{n}p_j},
\qquad
\overline{y}_i=\frac{\sum_{k=1}^{m}q_k\,y_{ik}}{\sum_{k=1}^{m}q_k}$$
for the weighted summative and formative means of student $i$. The \emph{generalized annotation} of student $i$ is the pair
\begin{align}\label{non-classic}
  \mu_i:=\overline{x}_i,
  \qquad
  \lambda_i:=1-\overline{y}_i .
\end{align}
\end{definition}

The rationale behind these independent mappings is rooted in the pedagogical contract of active learning. The degree of belief $\mu_i$ operates as a direct weighted average of cognitive performance: high grades yield strong favorable evidence for $P_i$. The degree of disbelief $\lambda_i$, conversely, is the mathematical complement of formative performance. A student who demonstrates excellent teamwork and preparation achieves a high formative average and thereby generates a minimal degree of unfavorable evidence, $\lambda_i\to0$; a student who actively disrupts the collaborative environment or systematically fails to prepare generates strong unfavorable evidence, $\lambda_i\to1$. What makes the construction possible is a feature of the domain rather than of the model: summative and formative assessments are institutionally distinct instruments, applied at distinct moments and for distinct pedagogical reasons, so that neither average constrains the other.

\begin{proposition}\label{prop:generalized-onto}
The map $(\overline{x}_i,\overline{y}_i)\mapsto(\mu_i,\lambda_i)$ of Definition~\ref{def:generalized} is a homeomorphism of $[0,1]^2$ onto itself. In particular every annotation of the USCP is attainable; under the canonical choice $E=|u|+|v|$, $r=\tfrac12$, every one of the twelve logical states of Definition~\ref{def:decision} is attainable.
\end{proposition}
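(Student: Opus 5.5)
The map in question is $\Phi(x,y)=(x,1-y)$, and the plan is to read everything off its explicit form. First I will check that $\Phi$ sends $[0,1]^2$ into itself: if $x,y\in[0,1]$ then $1-y\in[0,1]$. Next I will observe that $\Phi\circ\Phi=\mathrm{id}$, since $(x,1-(1-y))=(x,y)$. Hence $\Phi$ is a bijection of the square onto itself and is its own inverse. Both $\Phi$ and $\Phi^{-1}=\Phi$ are affine, hence continuous, so $\Phi$ is a homeomorphism. Geometrically, $\Phi$ is just the reflection of the USCP in the horizontal median $\lambda=\tfrac12$.

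For attainability, the point to make explicit is that the pair of weighted means $(\overline{x}_i,\overline{y}_i)$ itself ranges over all of $[0,1]^2$. Since the weights are non-negative with positive sums, each mean is a convex combination of the scores. Taking all summative scores of student $i$ equal to a prescribed $x\in[0,1]$ and all formative scores equal to a prescribed $y\in[0,1]$ gives $\overline{x}_i=x$ and $\overline{y}_i=y$. Disjointness of $S$ and $F$ is exactly what allows the two prescriptions to be made independently. Given any target annotation $(\mu,\lambda)$, I will take $x=\mu$ and $y=1-\lambda$; surjectivity of $\Phi$ then yields $(\mu_i,\lambda_i)=(\mu,\lambda)$.

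For the twelve states, Proposition~\ref{prop:partition} shows that under the canonical choice every fibre of $\sigma$ is non-empty; indeed each has non-empty interior. It suffices to pick one point $a_\varsigma$ in each fibre and realize it as an annotation by the previous paragraph. If concrete witnesses are wanted, I could list one per state: the corners $(1,0),(0,1),(1,1),(0,0)$ for $t,f,\top,\bot$, and for the eight sectors points near the center, say at distance $0.1$ along the bisector of each sector.

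There is no real obstacle here. The only step needing care is to make sure the attainability claim concerns the composite map from raw scores to annotations, not just $\Phi$. That is handled by the constant-score construction, which uses the positivity of $\sum_j p_j$ and $\sum_k q_k$.
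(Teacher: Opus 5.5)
Your proposal is correct and follows essentially the same route as the paper, which likewise observes that $(\overline{x},\overline{y})\mapsto(\overline{x},1-\overline{y})$ is a continuous involution of $[0,1]^2$, realizes every value of each mean by constant scores, and obtains the twelve states from the non-empty interiors guaranteed by Proposition~\ref{prop:partition}. Two of your details are explicit where the paper is not: the observation that the two means can be prescribed \emph{jointly} because the families are disjoint, and the concrete witnesses (the four corners, and points on the sector bisectors at distance $0.1$, where $|u|+|v|\approx0.13<\tfrac12$).
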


\begin{proof}
Each weighted mean is a convex combination of numbers in $[0,1]$ and therefore lies in $[0,1]$; conversely every value $c\in[0,1]$ is attained, for instance by a student all of whose scores in the family equal $c$. The map $(\overline{x},\overline{y})\mapsto(\overline{x},1-\overline{y})$ is a continuous involution of $[0,1]^2$, and therefore it is a homeomorphism. Under the canonical choice of $E$ and $r$ of Definition~\ref{def:extremity}, Proposition~\ref{prop:partition} ensures that each of the twelve regions has non-empty interior, each of them contains an attainable annotation.
\end{proof}

Proposition~\ref{prop:generalized-onto} is precisely the property that Corollary~\ref{cor:unreachable} denies to the classical pipeline: there the fused annotation was confined to half of the square, whereas here the whole square is in use. The next proposition explains what the two coordinates then measure.

\begin{proposition}[Reading of the coordinates]\label{prop:reading}
With the notation of Definition~\ref{def:generalized},
$$D_c=\mu_i-\lambda_i=\overline{x}_i+\overline{y}_i-1,
\qquad
D_{ct}=\mu_i+\lambda_i-1=\overline{x}_i-\overline{y}_i,$$
and the extremity index of Definition~\ref{def:decision} is
$$E_i=\Bigl|\overline{x}_i-\tfrac12\Bigr|+\Bigl|\overline{y}_i-\tfrac12\Bigr|.$$
Moreover the quadrant of the annotation, and hence its dominant logical state, is determined by the position of the two means relative to the midpoint of the scale: the state is $t$ when $\overline{x}_i\ge\tfrac12$ and $\overline{y}_i\ge\tfrac12$; $f$ when $\overline{x}_i\le\tfrac12$ and $\overline{y}_i\le\tfrac12$; $\top$ when $\overline{x}_i>\tfrac12>\overline{y}_i$; and $\bot$ when $\overline{y}_i>\tfrac12>\overline{x}_i$, with ties broken by the order of the clauses in Definition~\ref{def:decision}.
\end{proposition}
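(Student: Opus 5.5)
The plan is direct substitution of Definition~\ref{def:generalized} into the abbreviations of Definition~\ref{def:annotation}, followed by a translation of the clauses of Step~1 of Definition~\ref{def:decision}. First, with $\mu_i=\overline{x}_i$ and $\lambda_i=1-\overline{y}_i$, we compute $D_c=\mu_i-\lambda_i=\overline{x}_i-(1-\overline{y}_i)=\overline{x}_i+\overline{y}_i-1$ and $D_{ct}=\mu_i+\lambda_i-1=\overline{x}_i+1-\overline{y}_i-1=\overline{x}_i-\overline{y}_i$. Both identities are one-line algebra and need no earlier result.

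Second, for the extremity index we write the centered coordinates explicitly: $u=\mu_i-\tfrac12=\overline{x}_i-\tfrac12$ and $v=\lambda_i-\tfrac12=\tfrac12-\overline{y}_i$. The canonical index of Definition~\ref{def:extremity} is $|u|+|v|$, and since $|\tfrac12-\overline{y}_i|=|\overline{y}_i-\tfrac12|$, this gives $E_i=|\overline{x}_i-\tfrac12|+|\overline{y}_i-\tfrac12|$. The statement is to be read under the canonical choice, which is the one fixed for all figures.

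Third, the quadrant. From the expressions for $u$ and $v$, the sign conditions translate as follows:
\begin{itemize}
\item $u\ge0$ if and only if $\overline{x}_i\ge\tfrac12$;
\item $v\le0$ if and only if $\overline{y}_i\ge\tfrac12$;
\item $v\ge0$ if and only if $\overline{y}_i\le\tfrac12$;
\item $v>0$ if and only if $\overline{y}_i<\tfrac12$;
\item $v<0$ if and only if $\overline{y}_i>\tfrac12$.
\end{itemize}
Substituting these into the four clauses of Step~1, in their stated order, yields:
\begin{itemize}
\item $t$ when $\overline{x}_i\ge\tfrac12$ and $\overline{y}_i\ge\tfrac12$;
\item $f$ when $\overline{x}_i\le\tfrac12$ and $\overline{y}_i\le\tfrac12$;
\item $\top$ when $\overline{x}_i>\tfrac12>\overline{y}_i$;
\item $\bot$ when $\overline{x}_i<\tfrac12<\overline{y}_i$.
\end{itemize}
Where the first two clauses overlap, namely at $\overline{x}_i=\overline{y}_i=\tfrac12$, the clause order of Definition~\ref{def:decision} assigns $t$. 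That exclusiveness and exhaustiveness are preserved follows from Proposition~\ref{prop:partition}, because the change of variables is a bijection (Proposition~\ref{prop:generalized-onto}).

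The only step requiring care is the last. The reflection $\lambda=1-\overline{y}$ reverses the inequalities in $\overline{y}$ and exchanges strict with non-strict comparisons. I would verify each clause individually, and also check the statement's wording of the $\bot$ case against the correctly derived form, since a slip in orientation there is the likely pitfall.
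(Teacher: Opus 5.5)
Your proof is correct and takes the same route as the paper's: substitute $\mu_i=\overline{x}_i$ and $\lambda_i=1-\overline{y}_i$ to get $D_c$, $D_{ct}$ and $|v|=|\overline{y}_i-\tfrac12|$, then translate the sign conditions of Step~1 clause by clause. Your explicit restriction of the extremity formula to the canonical index is a precision the paper leaves implicit, and the orientation check you flag comes out clean, since your $\overline{x}_i<\tfrac12<\overline{y}_i$ is exactly the statement's $\overline{y}_i>\tfrac12>\overline{x}_i$.
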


\begin{proof}
Substituting $\mu_i=\overline{x}_i$ and $\lambda_i=1-\overline{y}_i$ gives $D_c=\overline{x}_i-(1-\overline{y}_i)=\overline{x}_i+\overline{y}_i-1$ and $D_{ct}=\overline{x}_i+(1-\overline{y}_i)-1=\overline{x}_i-\overline{y}_i$. For the extremity index, $|\lambda_i-\tfrac12|=|1-\overline{y}_i-\tfrac12|=|\tfrac12-\overline{y}_i|=|\overline{y}_i-\tfrac12|$, and the stated formula follows. Finally, the quadrant conditions of Definition~\ref{def:decision} are $\mu_i\gtrless\tfrac12$ and $\lambda_i\gtrless\tfrac12$; the first is $\overline{x}_i\gtrless\tfrac12$, and $\lambda_i\ge\tfrac12$ is equivalent to $\overline{y}_i\le\tfrac12$, which yields the four cases as stated.
\end{proof}

Proposition~\ref{prop:reading} is what makes the diagram usable by an instructor who has never seen a paraconsistent lattice. The horizontal position of a student records conceptual mastery and the vertical position records disengagement. The degree of contradiction is exactly the gap $\overline{x}_i-\overline{y}_i$ between what the student demonstrates in examinations and what the student invests in the course --- and it is worth contrasting this with the classical pipeline, where the same quantity measured the disagreement among sources. A student is classified as extreme exactly when both dimensions are far from the midpoint of the scale. The four quadrants then carry the four readings recorded in Table~\ref{tab:states}: strong on both dimensions, weak on both, strong in examinations but disengaged, or engaged but not yet succeeding in examinations.

\begin{remark}[The classical pipeline is a special case]\label{rem:special-case}
Definition~\ref{def:generalized} generalizes the usual construction rather than replacing it. If the formative family is taken to be the summative one, that is $m=n$, $Y_k=X_k$ and $q_k=p_k$ for every $k$, then $\overline{y}_i=\overline{x}_i=\mu_i$ and \eqref{non-classic} collapses to $\lambda_i=1-\mu_i$, the classical annotation, whose degree of contradiction vanishes identically. The states excluded by Corollary~\ref{cor:unreachable} are therefore excluded precisely because the two coordinates are being read off the same evidence, and they become available as soon as the two families are genuinely distinct.
\end{remark}

\begin{remark}[On the weights]\label{rem:weights}
Definition~\ref{def:generalized} leaves the weights free, so that a curriculum may encode in them the relative importance it assigns to each instrument. Two choices deserve mention. The uniform one, $p_j=1$ and $q_k=1$, reduces both coordinates to unweighted means. The one adopted below takes each weight to be the standard deviation of the corresponding variable within the cohort, so that an instrument counts in proportion to how much it separates students: a discipline in which everyone scores alike carries little information about any individual, whereas one that spreads the cohort out carries a great deal. This is a familiar idea in psychometrics, and it makes the construction applicable to real assessment records, where instruments differ widely in discriminative power.
\end{remark}

\section{Simulation Design and Data Generating Process}

To validate the proposed methodology, we conducted a simulation study to evaluate the PAL2v classification model under controlled parametric cohorts. We simulated raw assessment scores on a standard educational scale of $[0, 10]$ for $N=400$ students per cohort. Throughout the simulation the extremity boundary is the canonical one of Definition~\ref{def:extremity}, $E=|u|+|v|$ with $r=\tfrac12$. The freedom left by that definition is exercised here only to the extent of fixing this choice explicitly; the behavior of the classification under other boundaries is a question of calibration against assessment practice, and we leave it aside.

To mirror the assessment structure of rigorous exact sciences courses (such as Calculus and Algebra), the simulated database features variables bifurcated into two independent dimensions. The cognitive dimension is captured by $n=4$ summative assessments, corresponding to the disciplines of Mathematics, Physics, Biology and Chemistry. The behavioral dimension is captured by $m=3$ formative evaluations inherent to active learning methodologies such as Team-Based Learning, namely Teamwork, Prior Preparation, Class Engagement. The abbreviations are those used in Table~\ref{tab:estatistica_descritiva}.

For each simulated student $i$, raw assessment data were generated using independent stochastic processes. To control the distribution shapes while respecting the bounded nature of educational grading, we used scaled Beta distributions:

\begin{enumerate}
    \item \textbf{Summative Assessments (Raw Grades):} A tuple of $n$ summative grades $x_i = (x_{i1}, \dots, x_{in}) \in [0,10]^n$ was generated, where each score is calculated as $10X$, with the random variable $X \sim \text{Beta}(\alpha_{x}, \beta_{x})$.
    \item \textbf{Formative Assessments (Raw Behavioral Scores):} A tuple of $m$ formative evaluations $y_i = (y_{i1}, \dots, y_{im}) \in [0,10]^m$ was generated, where each score is calculated as $10Y$, with the random variable $Y \sim \text{Beta}(\alpha_{y}, \beta_{y})$.
\end{enumerate}

The shape parameters were fixed per cohort as follows. In Cohort A a single latent variable $L\sim\mathrm{Beta}(5,2)$ drives both families, each score being $10L$ perturbed by independent Gaussian noise of standard deviation $0.5$ and truncated to $[0,10]$. In Cohort B the summative scores are drawn from $\mathrm{Beta}(7,2)$ and the formative ones from $\mathrm{Beta}(8,2)$, independently. In Cohort C each summative variable is drawn from an equal mixture of $\mathrm{Beta}(2,6)$ and $\mathrm{Beta}(8,3)$, and each formative one from $\mathrm{Beta}(3,4)$. All scores are rounded to one decimal place. The pseudo-random seed was fixed at $42$ and $N=400$ students were generated per cohort.

The choice of the Beta distribution is mathematically motivated. Its natural support bounded within the interval $[0,1]$ maps perfectly to closed grading systems, avoiding the theoretical inconsistencies of using distributions with infinite tails (such as the Gaussian), which would require ad hoc truncation. Gaussian noise was restricted exclusively to the stochastic perturbation in the control cohort (Cohort A). Furthermore, the parametric flexibility provided by the shape parameters $\alpha$ and $\beta$ allows for the modeling of diverse real-world classroom profiles, ranging from highly skewed distributions (representing high-achieving or struggling cohorts) to bimodal shapes. The single exception is Cohort A, where the dependence between the two families is induced by adding Gaussian perturbations to a common latent Beta variable; the resulting scores are truncated to $[0,10]$, which affects fewer than one per cent of them.

The variables were mapped into the unit interval required by the PAL2v lattice by linear scaling, $x_{ij}/10$ and $y_{ik}/10$. Following Remark~\ref{rem:weights}, each weight was taken to be the standard deviation of the corresponding variable within the cohort, $p_j=s(X_j)$ and $q_k=s(Y_k)$, so that the paraconsistent annotation of student $i$ is
$$ \mu_i = \frac{\sum_{j=1}^{n} s(X_j)\,x_{ij}/10}{\sum_{j=1}^{n} s(X_j)}
\quad \mbox{and} \quad
\lambda_i = 1 - \frac{\sum_{k=1}^{m} s(Y_k)\,y_{ik}/10}{\sum_{k=1}^{m} s(Y_k)} .$$
The resulting weights are reported in Table~\ref{tab:weights}. Because every summative variable is drawn from the same distribution, and likewise every formative one, the weights are close to uniform in all three cohorts and the classification changes in fewer than five per cent of the students when they are replaced by uniform ones. The simulation therefore isolates the effect of the decoupling, which is what the article is about; the weights matter for real assessment records, where instruments differ in dispersion, and not for synthetic ones built to be homogeneous.

\begin{table}[htpb]
\centering
\caption{Weights used in the generalized construction, taken as the within-cohort standard deviation of each variable and shown here normalized to sum to one within each family}
\label{tab:weights}
\small
\begin{tabular}{@{}lrrrrrrr@{}}
\toprule
Cohort & Mathematics & Physics & Chemistry & Biology & Teamwork & Prior Preparation & Class Engagement \\
\midrule
A & 0.253 & 0.251 & 0.252 & 0.244 & 0.334 & 0.334 & 0.332 \\
B & 0.247 & 0.271 & 0.242 & 0.239 & 0.340 & 0.322 & 0.338 \\
C & 0.252 & 0.247 & 0.251 & 0.251 & 0.334 & 0.354 & 0.312 \\
\bottomrule
\end{tabular}
\end{table}

\begin{remark}[Graphical scaling]\label{rem:scaling}
Annotations are defined on the unit square, and every definition, proposition and table in this article is stated there. In the scatter plots of Section~\ref{sec:results}, however, both coordinates are multiplied by $10$, so that the axes carry the $[0,10]$ scale on which the curriculum records its assessments. The correspondence is immediate: the two medians of Step~1 read $\mu=5$ and $\lambda=5$, the two diagonals of Step~3 read $\mu=\lambda$ and $\mu+\lambda=10$, and the canonical extremity boundary of Definition~\ref{def:extremity} reads $|\mu-5|+|\lambda-5|\ge5$. No result depends on the scale; the choice is made so that a reader who works with the assessment records can locate a student on the diagram without converting anything.
\end{remark}

By establishing these decoupled generative processes for $\mu$ and $\lambda$, we avoid the deterministic constraint $\lambda = 1 - \mu$. We examine three distinct cohort profiles under the generalized procedure:

\begin{itemize}
    \item \textbf{Cohort A (Classical Bias):} A control cohort designed by inducing high covariance between summative and formative scores through a shared latent random variable. This stochastic dependence makes the two weighted means nearly coincide, so that by Proposition~\ref{prop:reading} the degree of contradiction is close to zero. The cohort is a control: it shows what the generalized construction returns when the two families of instruments happen to carry the same information, and should not be confused with the classical pipeline, whose confinement is a property of the aggregation and holds for every input.
    
    \item \textbf{Cohort B (High Consistency Cohort):} Parameterized to reflect an ideal class with strong academic and behavioral scores. Strongly left-skewed, independent Beta distributions ($\alpha > \beta$) concentrate the simulated coordinate pairs securely within the True ($t$) state of the unit square.
    
    \item \textbf{Cohort C (Educational Paradoxes):} Parameterized to force the emergence of highly inconsistent profiles, exploring the non-trivial regions. Each summative variable is drawn from a bimodal mixture, so that any single discipline separates students with severe conceptual difficulties from those with excellent mastery. The four disciplines are drawn independently of one another, which is the realistic assumption for a cohort in which difficulty is subject-specific rather than global; the composite degree of belief is therefore dispersed rather than bimodal.
\end{itemize}

For the comparison reported in Table~\ref{tab:comparison}, the classical pipeline was reconstructed on the same data by treating each of the seven instruments as a source with $\lambda_k=1-\mu_k$ and fusing the resulting annotations by the componentwise minimum, which is the procedure described in the applied literature and analyzed in Section~\ref{sec:classical}. This is a modeling choice on our part, since the literature does not fix what counts as a source in an educational setting; we adopt the most direct reading, in which every recorded instrument contributes one annotation.

A scatter plot containing the raw simulated data for each cohort is shown in Fig. \ref{fig:dataset}, and the descriptive statistics characterizing the generated variables are detailed in Table \ref{tab:estatistica_descritiva}.

\begin{figure}[htpb]
    \centering
    \includegraphics[width=0.5\linewidth]{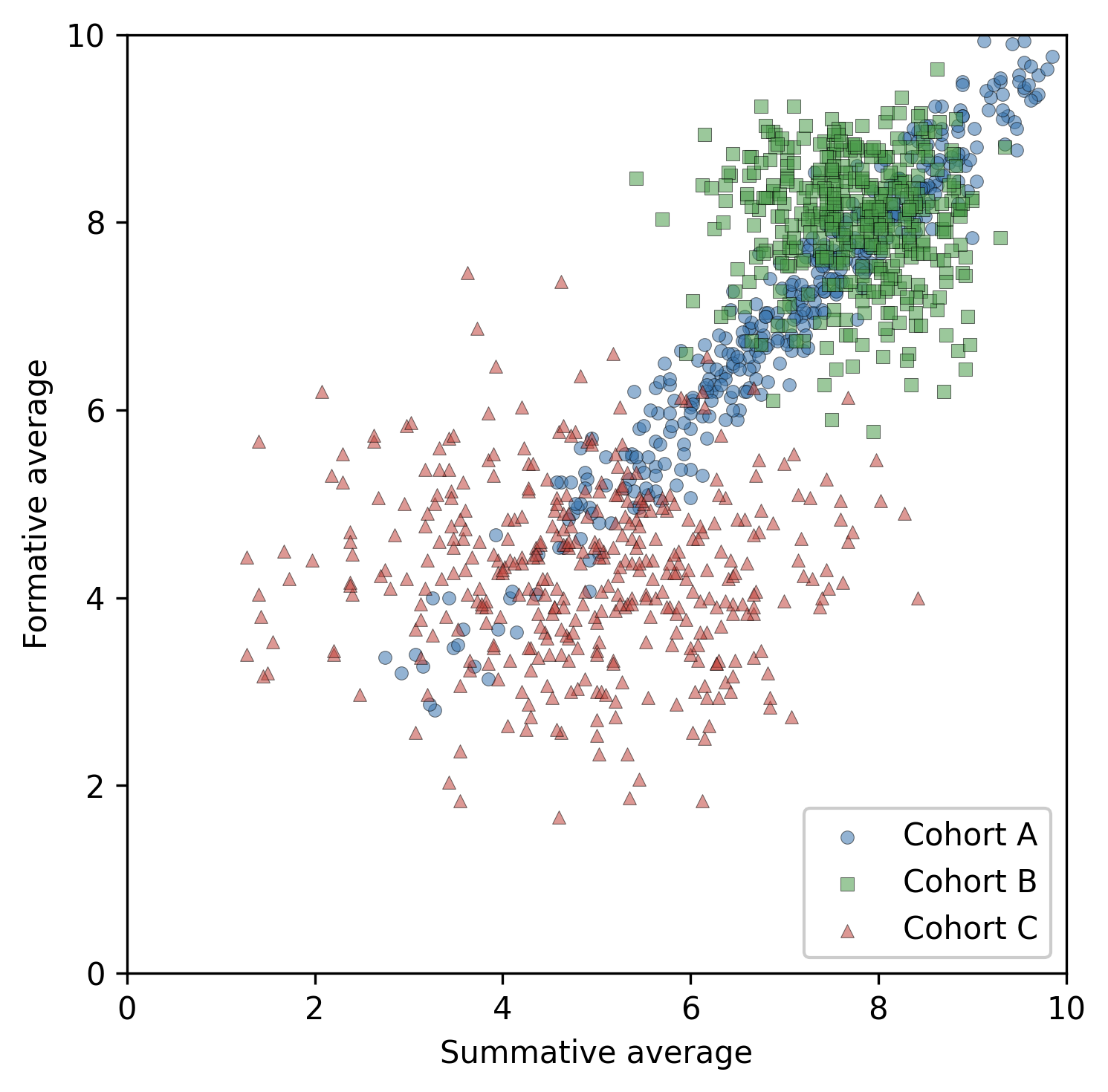}
    \caption{Scatter plot of summative and formative assessments per cohort.}
    \label{fig:dataset}
\end{figure}

\begin{table}[htpb]
\centering
\caption{Descriptive statistics of the simulated raw scores, on the $[0,10]$ scale, by cohort and overall}
\label{tab:estatistica_descritiva}
\small
\begin{tabular}{@{}llrrrrrr@{}}
\toprule
Cohort & Variable & $N$ & Mean & SD & Median & Min & Max \\
\midrule
A & Mathematics & 400 & 7.11 & 1.56 & 7.20 & 2.10 & 10.00 \\
 & Physics & 400 & 7.06 & 1.55 & 7.10 & 2.40 & 10.00 \\
 & Chemistry & 400 & 7.10 & 1.56 & 7.20 & 2.30 & 10.00 \\
 & Biology & 400 & 7.07 & 1.51 & 7.20 & 2.60 & 10.00 \\
 & Teamwork & 400 & 7.11 & 1.55 & 7.30 & 2.70 & 10.00 \\
 & Prior Preparation & 400 & 7.08 & 1.56 & 7.30 & 2.80 & 10.00 \\
 & Class Engagement & 400 & 7.10 & 1.54 & 7.15 & 2.60 & 10.00 \\
\midrule
B & Mathematics & 400 & 7.77 & 1.31 & 7.95 & 3.60 & 9.90 \\
 & Physics & 400 & 7.62 & 1.43 & 7.80 & 3.20 & 10.00 \\
 & Chemistry & 400 & 7.80 & 1.28 & 7.90 & 2.50 & 10.00 \\
 & Biology & 400 & 7.76 & 1.27 & 8.00 & 2.80 & 9.90 \\
 & Teamwork & 400 & 7.95 & 1.20 & 8.20 & 3.70 & 9.90 \\
 & Prior Preparation & 400 & 8.08 & 1.13 & 8.20 & 3.20 & 10.00 \\
 & Class Engagement & 400 & 7.98 & 1.19 & 8.25 & 4.10 & 9.80 \\
\midrule
C & Mathematics & 400 & 4.79 & 2.79 & 4.75 & 0.10 & 9.90 \\
 & Physics & 400 & 4.93 & 2.73 & 5.10 & 0.10 & 9.50 \\
 & Chemistry & 400 & 4.93 & 2.78 & 5.10 & 0.20 & 9.70 \\
 & Biology & 400 & 4.92 & 2.78 & 5.00 & 0.10 & 9.50 \\
 & Teamwork & 400 & 4.33 & 1.73 & 4.40 & 0.40 & 8.50 \\
 & Prior Preparation & 400 & 4.45 & 1.83 & 4.50 & 0.60 & 9.00 \\
 & Class Engagement & 400 & 4.10 & 1.61 & 4.15 & 0.70 & 8.60 \\
\midrule
All & Mathematics & 1200 & 6.56 & 2.37 & 7.10 & 0.10 & 10.00 \\
 & Physics & 1200 & 6.54 & 2.31 & 7.10 & 0.10 & 10.00 \\
 & Chemistry & 1200 & 6.61 & 2.33 & 7.20 & 0.20 & 10.00 \\
 & Biology & 1200 & 6.59 & 2.31 & 7.20 & 0.10 & 10.00 \\
 & Teamwork & 1200 & 6.46 & 2.16 & 6.85 & 0.40 & 10.00 \\
 & Prior Preparation & 1200 & 6.54 & 2.17 & 7.00 & 0.60 & 10.00 \\
 & Class Engagement & 1200 & 6.39 & 2.21 & 6.80 & 0.70 & 10.00 \\
\bottomrule
\end{tabular}
\end{table}

\section{Results}\label{sec:results}

The simulation results confirm that the generalized PAL2v construction correctly classifies non-trivial student profiles. Table \ref{tab:estatistica_descritiva} provides the descriptive statistics for all simulated variables, confirming that the data generation process produced the intended distributional shapes for each educational cohort. The spatial classification of these cohorts onto the USCP is represented in Figure~\ref{fig:three_figures}, on the $[0,10]$ axes of Remark~\ref{rem:scaling}.

\begin{figure}[htpb]
    \centering

    \begin{subfigure}{0.32\textwidth}
        \centering
        \includegraphics[width=\linewidth]{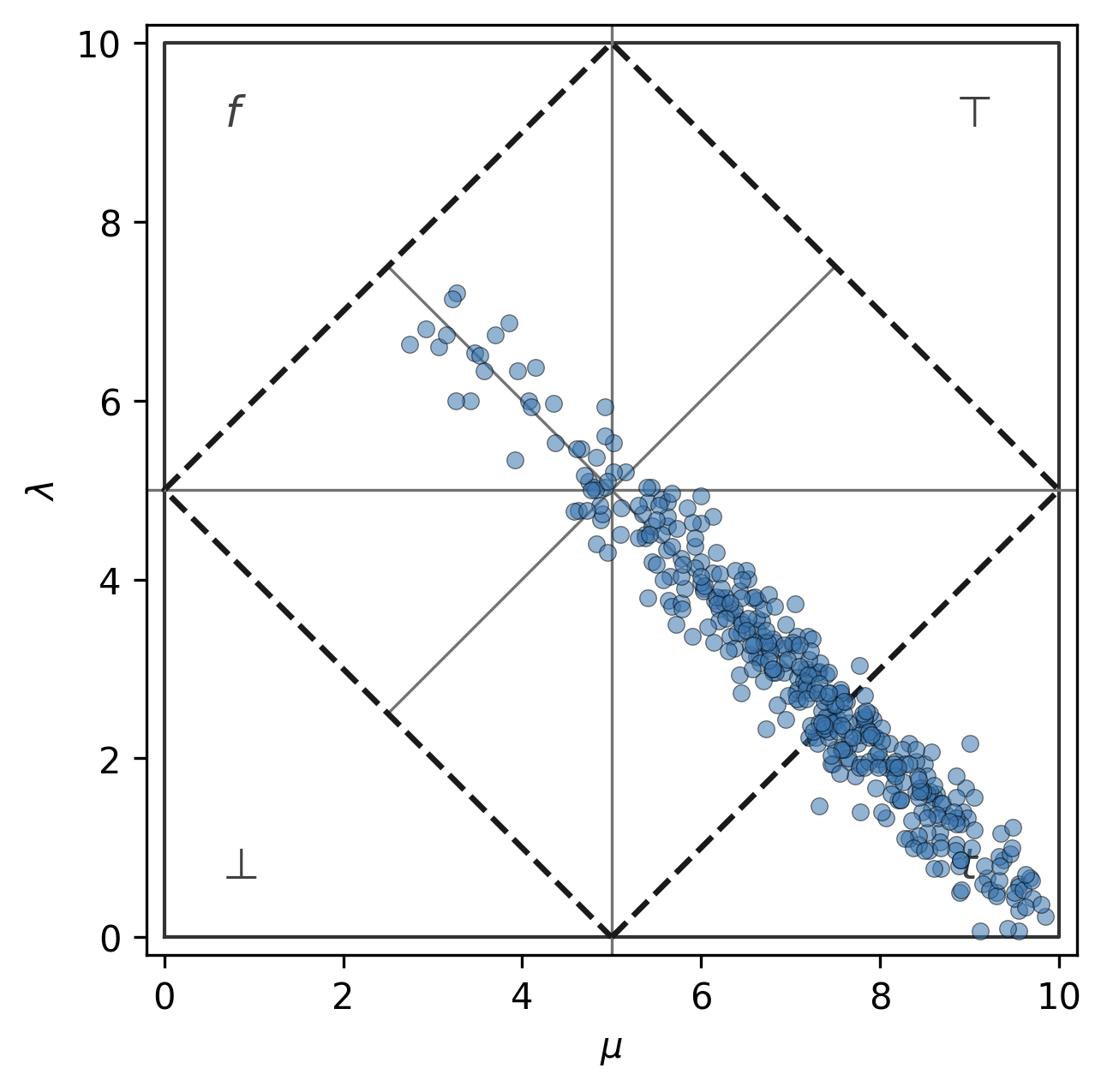}
        \caption{Cohort A}
    \end{subfigure}
    \hfill
    \begin{subfigure}{0.32\textwidth}
        \centering
        \includegraphics[width=\linewidth]{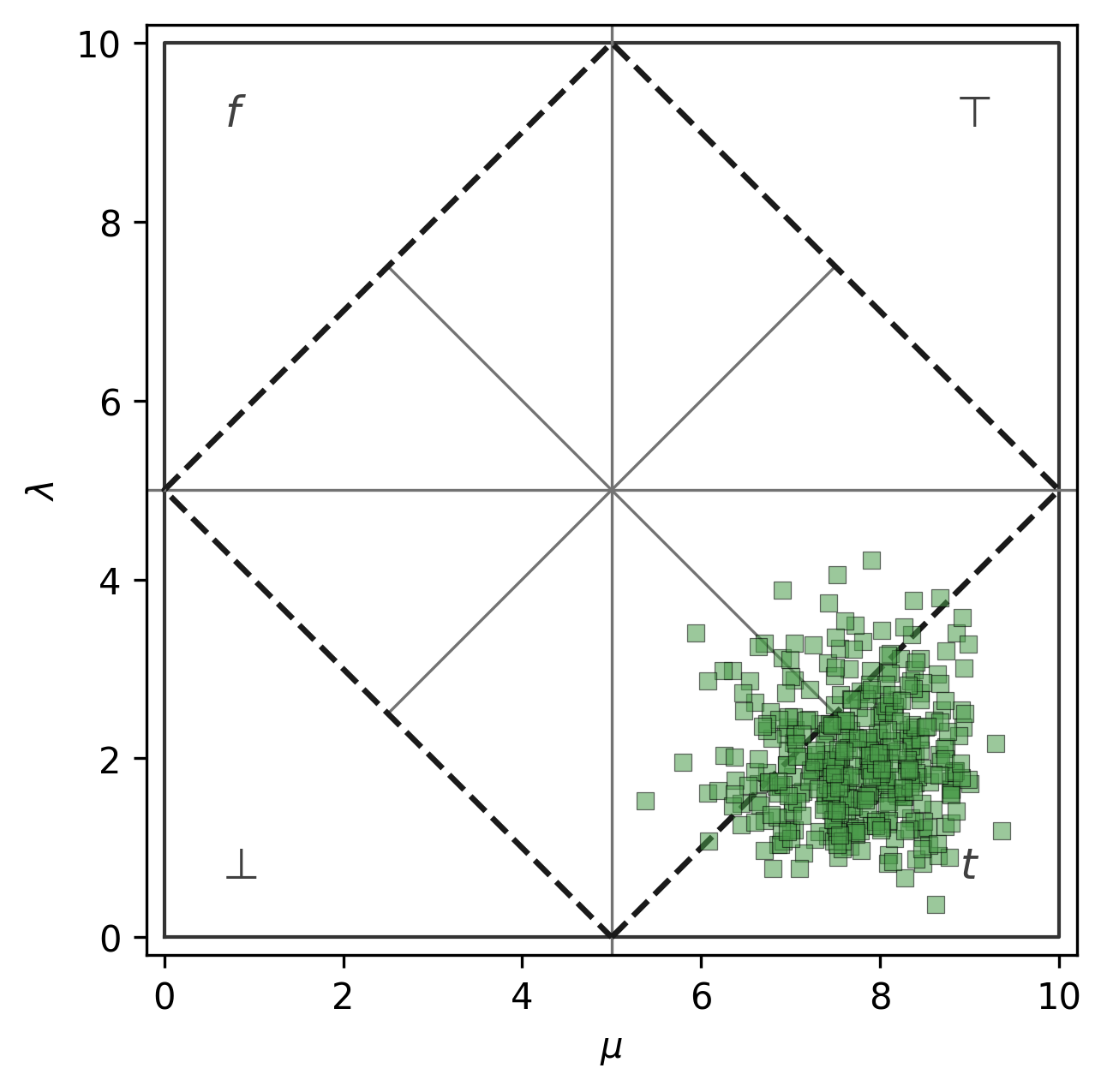}
        \caption{Cohort B}
    \end{subfigure}
    \hfill
    \begin{subfigure}{0.32\textwidth}
        \centering
        \includegraphics[width=\linewidth]{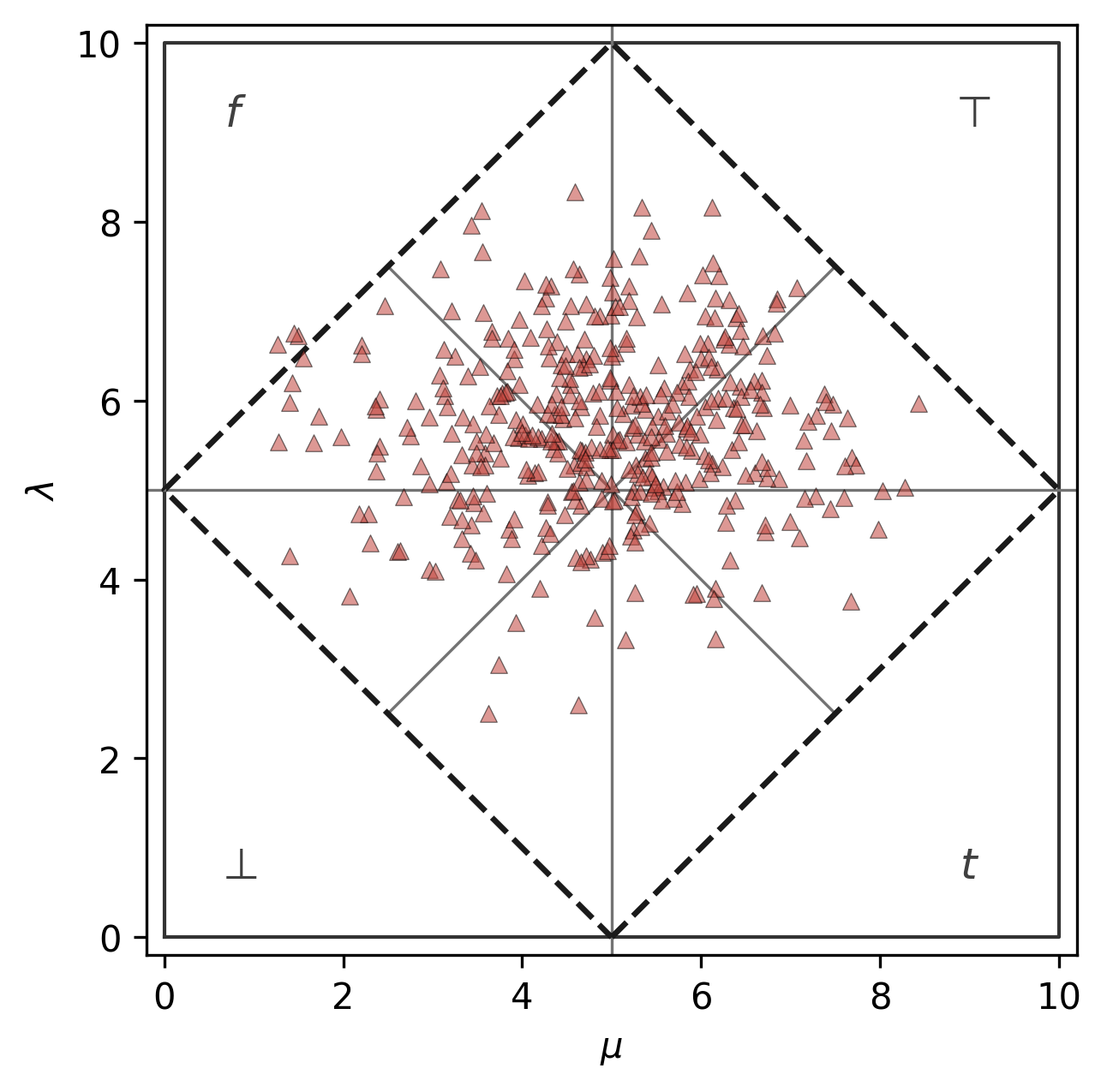}
        \caption{Cohort C}
    \end{subfigure}
    \caption{Annotations $(\mu,\lambda)$ of the three simulated cohorts, obtained throughout by the generalized construction of Definition~\ref{def:generalized} and plotted on the $[0,10]$ axes of Remark~\ref{rem:scaling}. (a) Cohort A, in which the two families of instruments are driven by a common latent variable, so that the annotations concentrate near the anti-diagonal. (b) Cohort B, a high-consistency cohort. (c) Cohort C, with summative and formative assessments generated independently. The dashed contour is the canonical extremity boundary $|\mu-5|+|\lambda-5|=5$.}
    \label{fig:three_figures}
\end{figure}

Figure~\ref{fig:three_figures}(a) illustrates the outcome of the generalized construction applied to Cohort A. Because both families of instruments are driven by a common latent variable, the two weighted means nearly coincide, and by Proposition~\ref{prop:reading} the degree of contradiction $D_{ct}=\overline{x}_i-\overline{y}_i$ is close to zero: the annotations concentrate along the anti-diagonal. Two points about this panel deserve attention. The construction used here is the generalized one throughout; Cohort A merely happens to have strongly correlated evidence sources, and its annotations therefore lie near the line to which the classical pipeline is confined. The confinement established in Corollary~\ref{cor:unreachable} is of a different nature: it holds for every input, by Proposition~\ref{prop:duality}, and is a property of the aggregation rather than of the data.

Figure \ref{fig:three_figures}(b) displays the coordinate mapping for Cohort B, parameterized to represent a high-consistency class. As expected from the independent, strongly left-skewed distributions, the annotations concentrate in the lower right quadrant, where summative performance is high and unfavorable evidence is low, so that the dominant state is $t$ throughout.

Figure \ref{fig:three_figures}(c) illustrates the paradoxical profiles of Cohort C. By using a bimodal distribution for summative assessments and an independent distribution for formative evaluations, the generalized approach scatters the annotations across the entire two-dimensional lattice space. Unlike Cohort A, the annotations now occupy all four quadrants: $10.2\%$ fall in the quadrant of truth, $39.4\%$ in that of falsity, $38.0\%$ in the inconsistent one and $12.4\%$ in the paracomplete one. Students in the inconsistent quadrant exhibit a high degree of belief, from excellent summative grades, together with a high degree of disbelief, from poor formative engagement; those in the paracomplete quadrant exhibit a low degree of belief together with a low degree of disbelief, that is, weak examination performance with adequate engagement. Almost none of them, however, is extreme: $99.2\%$ of the cohort lies strictly inside the canonical extremity boundary, so the states assigned are qualified rather than outright. The cohort displays the profiles the construction is meant to expose, without the evidence being decisive enough for an unqualified verdict.

Table~\ref{tab:states-by-cohort} records the distribution of the twelve logical states across the three cohorts. Three features are worth noting. Cohort B occupies three states only --- $t$, $qt\!\to\!\top$ and $qt\!\to\!\bot$ --- which is the expected behavior of a control: where the evidence is uniformly strong and consistent, the construction returns a verdict and invents no contradiction. Cohort A is concentrated in the same corner but spreads over nine states, a point we return to below. Cohort C, by contrast, leaves the corner altogether: its four most populated states are $\top\!\to\!f$ at $20.2\%$, $qf\!\to\!\bot$ at $19.8\%$, $qf\!\to\!\top$ at $18.8\%$ and $\top\!\to\!t$ at $17.8\%$,  three of which are among the states the classical pipeline cannot express. No student in any cohort is left unclassified, as Proposition~\ref{prop:partition} guarantees.

\begin{table}[htpb]
\centering
\caption{Distribution of the twelve logical states by cohort, in per cent, under the generalized construction with standard-deviation weights and the canonical extremity boundary}
\label{tab:states-by-cohort}
\small
\begin{tabular}{@{}lrrrrrr@{}}
\toprule
Cohort & $t$ & $f$ & $\top$ & $\bot$ & $qt\!\to\!\top$ & $qt\!\to\!\bot$ \\
\midrule
A & 43.2 & 0.0 & 0.0 & 0.0 & 24.2 & 21.2 \\
B & 81.0 & 0.0 & 0.0 & 0.0 & 7.0 & 12.0 \\
C & 0.0 & 0.8 & 0.0 & 0.0 & 7.0 & 3.2 \\
\bottomrule
\end{tabular}
\\[1ex]
\begin{tabular}{@{}lrrrrrr@{}}
\toprule
Cohort & $qf\!\to\!\top$ & $qf\!\to\!\bot$ & $\top\!\to\!t$ & $\top\!\to\!f$ & $\bot\!\to\!t$ & $\bot\!\to\!f$ \\
\midrule
A & 4.2 & 3.5 & 0.5 & 0.8 & 1.2 & 1.0 \\
B & 0.0 & 0.0 & 0.0 & 0.0 & 0.0 & 0.0 \\
C & 18.8 & 19.8 & 17.8 & 20.2 & 3.2 & 9.2 \\
\bottomrule
\end{tabular}
\\[1ex]
\end{table}

Cohort A deserves a second look. It was built so that a single latent variable drives both families of instruments, and the annotations do concentrate along the anti-diagonal, as Figure~\ref{fig:three_figures}(a) shows. Yet $29.7\%$ of the cohort still falls in the five states that require $\mu+\lambda\ge1$, and only $88.6\%$ of it lands in the quadrant of truth. The reason is that the two families are measured with independent error: even perfectly correlated underlying abilities yield summative and formative averages that differ by a little, and the degree of contradiction records that difference. This is worth stating because it is a property of measurement rather than of the simulation. Whenever two instruments estimate the same quantity imperfectly, some residual contradiction is present, and a construction that sets $\lambda=1-\mu$ discards it not because the evidence agrees, but because the arithmetic has been arranged so that it cannot disagree.

Table~\ref{tab:comparison} runs the two constructions on the same data. The second column reports the proportion of annotations lying in the five states that require $\mu+\lambda\ge1$; the third reports the same quantity under the classical pipeline. It is zero in all three cohorts, without a single exception among the $1200$ simulated students, which is Corollary~\ref{cor:unreachable} exhibited on data rather than proved. The generalized construction, on the same scores, places $29.7\%$, $7.0\%$ and $63.8\%$ of each cohort in those states. The fourth column shows how far the two disagree: between a third and a half of the students change logical state in Cohorts A and B, and almost all of them do in Cohort C. The disagreement is not a matter of degree at the margins; it is the difference between a classification that can express conflicting evidence and one that cannot.

\begin{table}[htbp]
\centering
\caption{The generalized construction against the classical pipeline on the same simulated data. The second and third columns report the proportion of annotations falling in the five states that require $\mu+\lambda\ge1$}
\label{tab:comparison}
\small
\begin{tabular}{@{}lrrrr@{}}
\toprule
Cohort & Above anti-diagonal & Above anti-diagonal & Reclassified & Extreme \\
 & (generalized, \%) & (classical, \%) & (\%) & (generalized, \%) \\
\midrule
A & 29.7 & 0.0 & 36.8 & 43.2 \\
B & 7.0 & 0.0 & 36.2 & 81.0 \\
C & 63.8 & 0.0 & 97.8 & 0.8 \\
\bottomrule
\end{tabular}
\end{table}

\section{Discussion}

Decoupling the two coordinates returns the paraconsistent lattice to use. Where the classical pipeline confines every annotation to half the square, leaving five of the twelve states unreachable in the non-degenerate case, the construction proposed here reaches all of them, and does so because summative and formative assessments are separate instruments rather than two readings of one. The gap between what a student demonstrates in examinations and what that student invests in the course becomes a coordinate of the representation instead of disappearing into an average.

The spatial classification within the USCP informs targeted pedagogical intervention by identifying profiles that a single average obscures. In Cohort C, the students who fall in the paracomplete quadrant --- $12.4\%$ of the cohort, distributed over the states $\bot\!\to\!t$ and $\bot\!\to\!f$ --- are the ``hardworking but struggling'' case: they meet the behavioral contract, preparing and collaborating, but have not yet grasped the core concepts. The appropriate response is academic support, such as foundational mentoring or tutoring, rather than any disciplinary measure.

Conversely, the $38.0\%$ falling in the inconsistent quadrant, chiefly in the states $\top\!\to\!f$ and $\top\!\to\!t$, are the ``brilliant but uncooperative'' case: high summative marks together with weak preparation or disruptive collaborative behavior, and therefore a high degree of disbelief. Under a standard grade point average the examination marks would mask the behavioral deficit entirely. Here the contradiction is visible in the coordinate itself and acts as an alert for an attitudinal intervention, which is what preserves the pedagogical integrity of an active learning methodology such as Team-Based Learning. That most of these students receive a qualified rather than an outright state is itself informative: it says the evidence points to the profile without being decisive, which is the right posture for a system meant to prompt a conversation rather than to issue a verdict.

Presenting the states on the square itself, rather than through the scalar indices $D_c$ and $D_{ct}$, keeps the representation legible to the people who have to act on it. An instructor reads a student's position without converting coordinates: the quadrant gives the dominant state, the distance from the center says whether the evidence is decisive, and the two dimensions stay separately visible instead of being summed into a single mark.

We note two limitations. The structural claims --- that the classical pipeline confines annotations to half the lattice, and that the twelve regions admit an unambiguous specification --- are established by proof and do not depend on data. Everything else rests on simulation. Synthetic scores can show that the states released by the decoupling are populated under plausible cohort profiles; they cannot show that those states are populated in any real classroom, nor how often. That requires assessment records from actual cohorts, and it is the next step. Whether the resulting classifications track anything of longer-term interest, such as retention or professional development, is a further question that only a longitudinal study could answer.

\section{Conclusion}\label{sec13}

This article identifies a structural limitation in the applied literature on Paraconsistent Annotated Logic with two values. When the degree of disbelief is taken as the complement of the degree of belief, the sign of the degree of contradiction is settled by the choice of aggregation rather than by the evidence, and the fused annotation cannot leave one half of the lattice. Five of the twelve logical states, among them contradiction itself, are then unreachable in the non-degenerate case, by construction.

We proposed a decoupled construction in which the degree of belief comes from summative assessments and the degree of disbelief from formative ones, and showed that every annotation of the square, and therefore every logical state, is then attainable. We also gave an unambiguous specification of the twelve regions, which the inequality lists in current use do not provide. The simulations illustrate the profiles that the construction makes visible and that a single average conceals: the student who performs well in examinations while withdrawing from the collaborative work of the course, and the student who does the work without yet succeeding in the examinations.

The next step is empirical. Applying the construction to assessment records from real cohorts would show which states are actually occupied, and how the occupancy moves over a term; a longitudinal design would then allow the classifications to be compared against outcomes such as retention. Neither question can be settled by
simulation, and both are open.

\vspace{0.5cm}

{\bf{Data and code availability}}

The data analyzed in this article are synthetic. They are produced by the script \texttt{generate\_data.py}, which fixes the pseudo-random seed at $42$ and generates $N=400$ students per cohort; the annotations, tables and figures are produced by \texttt{analysis.py}. Both scripts are available as supplementary material, and running them in sequence reproduces every number and every figure reported here. No human subjects were involved and no personal data were processed, so no ethics approval was required.

\vspace{0.3cm}

{\bf{Declaration of assistance}}

An AI language model was used to assist in drafting the structure of the data-generating script. All distributions, parameters and mathematical results were defined, verified and are the responsibility of the authors. No text or data in this article was generated by such a model without author verification.



\bibliographystyle{plain}
\bibliography{cluster-analysis}
\end{document}